  \providecommand\BibTeX{{%
    \normalfont B\kern-0.5em{\scshape i\kern-0.25em b}\kern-0.8em\TeX}}}
\newcommand\pout{\bgroup\markoverwith{\textcolor{black}{\rule[0.5ex]{2pt}{0.4pt}}}\ULon}
\theoremstyle{definition}
\newtheorem{definition}{Definition}[section]
\newtheorem{theorem}{Theorem}[section]
\theoremstyle{remark}
\begin{document}
\fancyhead{}

\title{PCOR: Private Contextual Outlier Release\\via Differentially Private Search}


\author{Masoumeh Shafieinejad}
\affiliation{%
  \institution{University of Waterloo}
  \city{Waterloo}
  \country{Canada}}
\email{masoumeh@uwaterloo.ca}

\author{Florian Kerschbaum}
\affiliation{%
  \institution{University of Waterloo}
  \city{Waterloo}
  \country{Canada}}
\email{fkerschb@uwaterloo.ca}

\author{Ihab F. Ilyas}
\affiliation{%
  \institution{University of Waterloo}
  \city{Waterloo}
  \country{Canada}}
\email{ilyas@uwaterloo.ca}








\begin{abstract}
Outlier detection plays a significant role in various real world applications such as intrusion, malfunction, and fraud detection. Traditionally, outlier detection techniques are applied to find outliers in the context of the whole dataset. However, this practice neglects the data points, namely contextual outliers, that are not outliers in the whole dataset but in some specific neighborhoods. Contextual outliers are particularly important in data exploration and targeted anomaly explanation and diagnosis. In these scenarios, the data owner computes the following information: i) The attributes that contribute to the abnormality of an outlier (metric), ii) Contextual description of the outlier's neighborhoods (context), and iii) The utility \textcolor{black}{score of the context, e.g. its strength in showing the outlier's significance, or in relation to a particular explanation for the outlier.} However, revealing the outlier's context leaks information about the other individuals in the population as well, violating their privacy. We address the issue of population privacy violations in this paper. There are two main challenges in defining and applying privacy in contextual outlier release. In this setting, the data owner is required to release a \emph{valid context} for the queried record, i.e.~a context in which the record is an outlier. Hence, the first major challenge is that the privacy technique must preserve the validity of the context for each record. We propose techniques to protect the privacy of individuals through a relaxed notion of differential privacy to satisfy this requirement. The second major challenge is applying the proposed techniques efficiently, as they impose intensive computation to the base algorithm. To overcome this challenge, we propose a graph structure to map the contexts to, and introduce differentially private graph search algorithms as efficient solutions for the computation problem caused by differential privacy techniques. 
\end{abstract}

\begin{CCSXML}
<ccs2012>
<concept>
<concept_id>10002978.10003018.10003019</concept_id>
<concept_desc>Security and privacy~Data anonymization and sanitization</concept_desc>
<concept_significance>500</concept_significance>
</concept>
</ccs2012>
\end{CCSXML}

\ccsdesc[500]{Security and privacy~Data anonymization and sanitization}

\keywords{differential privacy, contextual outlier detection, graph search, private sampling}


\maketitle

\section{Introduction} \label{Introduction}
We borrow the contextual outliers example of Han et al.~\citep{context}: ``The temperature today is $28^{\circ} C$. Is it exceptional (i.e., an outlier)? It depends, for example, on the time and location! If it is in winter in Toronto, yes, it is an outlier. If it is a summer day in Toronto, then it is normal. Unlike traditional outlier detection, in this case, whether or not today's temperature value is an outlier depends on the context --- the date, the location, and possibly some other factors". Contextual outlier detection helps with the problem of \textit{hidden} outliers --- the data points that are considered \textit{normal} compared to the whole population, but are outliers when compared to a subset of the dataset. Consider an analyst performing market research who wishes to determine the companies that are unusually profitable. While Company $V$ might have normal reported profit compared to all companies, it may have outlying profit when compared to media companies with less than 2000 employees. In this case, the released context for Company $V$ is [Business = media] $\land{}$ [$|Employees| < 2000$]. 
For the same data point $V$ there might be multiple contexts \textcolor{black}{(explanations)} in which $V$ is an outlier, each of which might have a different \textit{utility} value.  \textcolor{black}{One example for utility is the size of the population that is covered by the context, which indicates the significance of the outlier. Another example is the context's overlap with a desired context, suitable for scenarios where the private explanation is required to be in a close relation with a chosen context.} The data owner releases a context for $V$ that ideally achieves the highest utility. We assume reporting the outlier data point $V$ is allowed when covered by the business purpose of performing the analysis. However, releasing the context --- although useful for the data analyst --- leaks information about other individuals in the dataset. This might not be permitted without the users' consent, according to privacy laws (e.g. GDPR~\citep{GDPR}). 

\subsection{Technical Challenges}
The privacy for statistical queries of various types, such as average, sum, variance, histogram, median, and maximum likelihood estimator, have been investigated \citep{DP2,DP3,DP4} in the literature. However, the studies of privacy in outlier analysis are few, and mostly -- described in more detail in Section 7 -- addressing the privacy of the outlier \citep{Crowdblend,Outlierprivacy,Bohler,Okada}. Consequently, the privacy of individuals in the remaining population has remained neglected, leaving the private contextual outlier reporting (PCOR) design as an open problem.

Differential privacy~\citep{DP1, DP2, DP5} is a popular privacy notion in data analysis~\citep{Apex} that has shown success in providing privacy when applied to products made by organizations such as the US Census Bureau~\citep{Census1, Census2}, Google~\citep{Rappor}, Apple~\citep{Apple} and Uber~\citep{Uber}. Since differential privacy i) guarantees privacy for individual records, ii) ensures privacy even in the presence of side information about the data, and iii) bounds the information leakage to a total privacy budget ($\epsilon$) across multiple data releases; it thoroughly meets the privacy needs in PCOR. However, the challenges of utilizing differential privacy in PCOR are many, including defining a meaningful notion of differential privacy for contextual outliers, finding a proper technique that fits PCOR's requirements, and proving that the privacy budget is reasonably bounded. When queried about a record, PCOR must select a \emph{private valid} context with ideally high utility among all candidates. To achieve this goal, we propose a variation of the Exponential mechanism \footnote{\textcolor{black}{We introduce the Exponential mechanism in Section 
\ref{ExpMech}.}} in PCOR, as this mechanism provides differential privacy when the nature of the candidates is discrete. However, the conventional definition of privacy and neighboring datasets in the Exponential mechanism does not guarantee the validity of the context in the outlier release case. This motivates us to propose a variation of the Exponential mechanism \textcolor{black}{that guarantees the query response is valid} and provides a relaxed notion of differential privacy. Our proposed technique, similar to the conventional Exponential mechanism, in its formulaic application inevitably enumerates all possible contexts to select one. This requirement heavily affects the performance of the scheme, impeding its practicality. To summarize, our challenge is to provide a design for private contextual outlier reporting that 
\begin{enumerate}
    \item Defines a notion of differential privacy for contextual outliers, and provides a technique for achieving it. As a result, in most cases, the output of PCOR is approximately the same ($\leq e^{\epsilon}$), even if any single record in the input database is arbitrarily added or removed;
    \item Reduces the complexity from exponential time in the formulaic (direct) approach to polynomial time. This reduction however, should not affect whether the design supports differential privacy;
    \item Achieves high utility. Supporting privacy and scalability, should not prevent the design from meeting its purpose in providing \textit{useful} data for the analyst, and;
    \item Is compatible with any utility function (e.g.~population size) as well as any outlier detection algorithm, instead of being designed for a restricted number of specific cases. 
\end{enumerate}

\subsection{Our Contributions}\label{Contribution}
We propose PCOR, a framework for privately releasing contextual outliers, that addresses the aforementioned challenges:
\begin{enumerate}
    \item PCOR utilizes a technique for a relaxed notion of differential privacy that prevents an adversary from learning information about the individuals in the context other than the reported outlier. 
    \item We propose deploying a sampling layer -- based on a graph structure -- prior to applying the Exponential mechanism, to reduce the complexity of PCOR to polynomial time. In our efficient sampling, we analyze the Breadth-First and the Depth-First search algorithms. We modify these algorithms to support differential privacy for the outcome. We contribute the first presentation of differentially private graph search algorithms in the literature.
    \item We prove our proposed solution is private and compare our algorithm in privacy and computational complexity to alternative solutions. The performance and utility of our algorithm is also supported by our experimental results.
    \item We show that our framework can accommodate any utility function for the contexts. \textcolor{black}{We provide results for two examples of utility functions that correspond to two separate classes of outlier release. One, evaluates the context based on the size of the population it covers while the other, evaluates the contexts based on its relation (overlap) with a chosen context.} Ultimately, we claim our algorithm works for the generic case of reporting the contextual outliers and is not limited to any particular outlier detection algorithm. Our claims are reinforced by experimental results over three algorithms for different outlier detection categories. 
\end{enumerate}
PCOR solves the challenge of simultaneously providing privacy, utility, and performance in reporting contextual outliers. Applied on a dataset of 50,000 records, PCOR reduces \textcolor{black}{the runtime from three days} in the direct differentially private approach to 37 minutes; while it maintains 90\% of the maximum utility\footnote{\textcolor{black}{These numbers correspond to the experiments where the utility is ``population size". The experiments for the ``overlap" utility run even faster (Section \ref{Overlap}).}} and guarantees relaxed (output constrained, defined in Section \ref{OCDP}) differential privacy with $\epsilon = 0.2$.
\subsection{Paper Organization}
The rest of this paper is organized as follows: In Section 2 we describe the outlier detection algorithms and define differential privacy. In Section 3 we provide the formal model and problem definitions for our private contextual outlier release (PCOR). We discuss the direct approach of embedding a differential privacy mechanism in contextual outlier detection in Section 4 and address the performance challenge. In Section 5, we introduce a graph representation of contexts used in our efficient sampling algorithms. Privacy proofs of the candidate algorithms are also covered in Section 5. Their performance and accuracy are evaluated through the experiments described in Section 6. Our results in Section 6 confirm that our final candidate succeeds in achieving privacy, accuracy and efficiency. In Section 7, we situate our work in the current body of research, \textcolor{black}{and in Section 8 we conclude our work.} 
\section{Preliminaries}
\subsection{Outlier Detection Algorithms} \label{OutlierAlgs} 
Outlier detection algorithms fall into three main categories: i) Statistics-based, ii) Distance-based and iii) Model-based methods \citep{Ihab}. In statistics-based outlier detection techniques \citep{Grubbs,TMtest,Stats1,ESDtest,Fit2,Fit1,Hist,KDE} it is assumed that normal data points appear in high probability regions of a stochastic model, while outliers appear in its low probability regions. There are two different types of statistics-based approaches: a) Hypothesis testing, and b) Distribution fitting approaches. The hypothesis testing \citep{Grubbs,TMtest, ESDtest} approach calculates a test statistic based on the data points, then it determines whether the null hypothesis claiming no outlier exists in the dataset, is valid or not. \textcolor{black}{We evaluate Grubbs' test~\citep{Grubbs} as a hypothesis testing outlier detection.}
The distribution fitting approach \citep{Hist,Fit1,Fit2,KDE} first fits a probability density function to the data. Next, it labels the points with low probability as outliers. \textcolor{black}{We select the Histogram approach~\cite{Hist} for the fitting distribution category.}
We also investigate Local Outlier Factor~\cite{LOF} as the distance-based outlier detection method in our experiments. The local outlier factor (LOF) method scores data points based on the density of their neighboring data points and detects the outlier points based on the scores. 

Contrary to the outlier detection algorithms described above, model-based techniques detect the outliers in a supervised approach. However, in this paper we focus on the unsupervised outlier detection algorithms as they do not require access to labeled data. 

\subsection{Differential Privacy}
Differential privacy (DP), was introduced in 2006 by Dwork et al.~\citep{DP_org}. To protect the individual's data, DP mechanisms require that changing one record in the database should not change the output distribution \textcolor{black}{much. The formal} definition is as follows.
\textcolor{black}{
\begin{definition}{(Differential Privacy~\citep{DP_org})}
A randomized function \textcolor{black}{$\mathcal{M}:\mathcal{D} \rightarrow{} \mathcal{O}:$} gives $\epsilon$-differential privacy if for any set $S \subseteq \mathcal{O}$ and any pair of neighboring datasets $D_1$ and $D_2$ from $\mathcal{D}$ denoted by $(D_1,D_2) \in \mathcal{N}$, such that $D_1$ and $D_2$ differ by adding/removing a record:
\begin{equation}\label{eq:DP}
Pr[\mathcal{M}(D_1) \in S] \leq e^{\epsilon} \times Pr[\mathcal{M}(D_2) \in S]
\end{equation}
\end{definition}}
\subsection{The Exponential Mechanism}\label{ExpMech}
Exponential mechanism is a differential privacy mechanism used for non-numeric values. The goal of this mechanism is to randomly map a set of $n$ inputs each from a domain $\mathcal{D}$ to some output in a range $\mathcal{R}$~\citep{ExpMech}. There are no specific assumptions about the nature of $\mathcal{D}$ or $\mathcal{R}$ other than a base measure $\mathcal{\mu}$ on $\mathcal{R}$ exists. We proceed with the common assumption that the base measure $\mathcal{\mu}$ is uniform. The Exponential mechanism is driven by a utility function \textcolor{black}{$u : \mathcal{D} \times \mathcal{R} \rightarrow{} \mathbb{R}$} that assigns a real valued score to any pair $(D, r)$ from $\mathcal{D} \times \mathcal{R}$, higher scores indicate more appealing $r$'s for $D$. Given a $D \in \mathcal{D}$ the goal of the mechanism is to return an $r \in \mathcal{R}$ such that $u(D, r)$ is (approximately) maximized while guaranteeing differential privacy. To provide the definition for the Exponential mechanism, we need to define \textit{sensitivity} first.
\begin{definition}\label{def:Sensitivity} (Sensitivity) 
The sensitivity of a function \textcolor{black}{$u : \mathcal{D} \times \mathcal{R} \rightarrow{} \mathbb{R}$,} is defined as \textcolor{black}{\[\Delta u = max_{(D_1,D_2) \in  \mathcal{N}, r \in \mathcal{R}} = |u(D_1,r)-u(D_2,r)|.\]} The sensitivity of $u$ is the maximum change in the function output resulted from replacing any individual's data with a different one. In other words, sensitivity is the maximum change in $u$, if we switch from any input dataset $D_1$ to any neighbor $D_2$ of $D_1$. 
\end{definition}
\begin{definition}{(The Exponential mechanism)}
~\citep{ExpMech} For any function \textcolor{black}{$u : \mathcal{D} \times \mathcal{R} \rightarrow{} \mathbb{R}$,} the Exponential mechanism is defined as follows:
\begin{multline}
    Exp^\epsilon_u(D, \mathcal{R}) := \text{Choose }r, \\ Pr[r] = \frac{exp(\frac{\epsilon u(D, r)}{2\Delta u})}{\sum_{r' \in \mathcal{R}} exp(\frac{\epsilon u(D, r')}{2\Delta u})}  
\end{multline}
\end{definition}
\textcolor{black}{We include Theorem \ref{thm: ExpPriv} for the privacy of the Exponential mechanism and refer to McSherry et al.'s work~\citep{ExpMech} for the its proof. For PCOR's accuracy, we present empirical evaluations in Section \ref{Experiments}. }
\begin{theorem}
Theorem 1. (Privacy~\citep{ExpMech}). The Exponential mechanism $Exp^\epsilon_u(d)$ gives $(2\epsilon \Delta u)$-differential privacy.
\label{thm: ExpPriv}
\end{theorem}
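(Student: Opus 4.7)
The plan is to bound the ratio $\Pr[Exp^\epsilon_u(D_1) = r]/\Pr[Exp^\epsilon_u(D_2) = r]$ directly for an arbitrary output $r \in \mathcal{R}$ and an arbitrary pair of neighboring datasets $(D_1, D_2) \in \mathcal{N}$. Since the mechanism samples from a discrete distribution (or a continuous one against the uniform base measure $\mu$) and any measurable event $S \subseteq \mathcal{R}$ can be obtained by integrating/summing the pointwise densities, the standard reduction says that an $e^{\epsilon}$-pointwise bound on the density ratio immediately yields the set version of Definition~1 (the Radon--Nikodym / union-type argument). So the whole proof reduces to bounding one pointwise ratio.

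Writing out the ratio from the definition, it factors as
\begin{equation*}
\frac{\Pr[Exp^\epsilon_u(D_1)=r]}{\Pr[Exp^\epsilon_u(D_2)=r]}
=\underbrace{\exp\!\Bigl(\tfrac{\epsilon\bigl(u(D_1,r)-u(D_2,r)\bigr)}{2\Delta u}\Bigr)}_{(A)}
\cdot
\underbrace{\frac{\sum_{r'\in\mathcal{R}}\exp\!\bigl(\tfrac{\epsilon\, u(D_2,r')}{2\Delta u}\bigr)}{\sum_{r'\in\mathcal{R}}\exp\!\bigl(\tfrac{\epsilon\, u(D_1,r')}{2\Delta u}\bigr)}}_{(B)}.
\end{equation*}
Factor $(A)$ is bounded using the numerator-level sensitivity: by Definition~\ref{def:Sensitivity}, $|u(D_1,r)-u(D_2,r)|\le \Delta u$, so $(A)\le e^{\epsilon/2}$. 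Factor $(B)$ needs a term-by-term bound on the partition functions: for every $r'$, $u(D_2,r')\le u(D_1,r')+\Delta u$, so each summand in the numerator of $(B)$ is at most $e^{\epsilon/2}$ times the corresponding summand in the denominator, yielding $(B)\le e^{\epsilon/2}$. Multiplying, the full ratio is at most $e^{\epsilon}$, which is the required $\epsilon$-DP pointwise bound (the $(2\epsilon\Delta u)$ in the theorem statement is the scaling induced by reading the parameter through the $\tfrac{\epsilon}{2\Delta u}$ coefficient inside the exponent).

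The step I expect to be the only delicate one is the bound on $(B)$, because it is a ratio of sums, not a ratio of single exponentials; a naive attempt to push the $e^{\epsilon/2}$ bound in and out of the sum requires the observation that the inequality holds \emph{termwise} before summing, which is what lets us pull the common $e^{\epsilon/2}$ factor outside and cancel the matched partition sums. Once that is in place, the rest is arithmetic, and the extension from a single $r$ to an arbitrary event $S$ is the routine integration step mentioned above. Since the argument is exactly the one given by McSherry and Talwar~\citep{ExpMech}, I would cite their proof rather than repeat it in full.
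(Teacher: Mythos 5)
Your argument is the standard McSherry--Talwar proof (termwise bound on the partition-function ratio plus the numerator sensitivity bound), and it is correct; the paper itself does not prove this theorem at all but simply defers to the citation~\citep{ExpMech}, which is also what you propose to do, so you are in effect reproducing the cited proof rather than diverging from the paper. One remark: with the mechanism exactly as displayed in the paper (exponent $\tfrac{\epsilon u(D,r)}{2\Delta u}$) your computation yields an $e^{\epsilon}$ bound, i.e.\ $\epsilon$-DP, whereas the stated $(2\epsilon\Delta u)$ guarantee corresponds to the original McSherry--Talwar parametrization with exponent $\epsilon u(D,r)$; this mismatch is in the paper's statement, not in your proof, and your closing parenthetical reconciles the two parametrizations correctly.
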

As shown in Theorem \ref{thm: ExpPriv}, for the Exponential mechanisms to be the most useful, sensitivity ($\Delta u$) must be limited. Commonly the utility function is chosen such that $\Delta u \leq 1$, so that $Exp^\epsilon_u$ ensures $(2\epsilon)$-differential privacy. 
\subsection{Output Constrained Differential Privacy} \label{OCDP}
To provide privacy in our PCOR, we utilize a relaxed notion of differential privacy, namely Output Constrained Differential Privacy~\citep{OCDP}. We postpone the reasoning for this choice to Section \ref{OCDPinPCOR}, after introducing the notations and defining the problem at the beginning of Section \ref{Problem}.
\theoremstyle{definition}
\begin{definition}{($f$-Neighbors)}\label{def:fneighbor}
Given function $f:\mathcal{D} \rightarrow{} O$, for any \textcolor{black}{pair} of datasets $D_1,D_2 \in \mathcal{D}$, \textcolor{black}{the} datasets $D_1$ and $D_2$ are neighbors w.r.t.~$f$, denoted by $\mathcal{N}(f(.))$, if 
\begin{enumerate}
    \item $f(D_1) = f(D_2) \neq \emptyset$
    \item $(D_1, D_2) \in \mathcal{N}$
\end{enumerate}
\end{definition}
\textcolor{black}{The $f$-Neighbors $D_1$ and $D_2$, not only have to satisfy the general neighboring condition and differ from each other in only one record, but also need to result in the same (non-empty) output result/set when the function $f$ is applied to them. In our outlier release use case, the function $f$ outputs the set of all valid contexts in the dataset for a particular outlier. We elaborate outlier release in Section \ref{Problem} and define $f$ in Definition \ref{def: COE}.}
\begin{definition}{(Output Constrained Differential Privacy \textcolor{black}{ -- OCDP})}\label{def: OCDP}
A randomized mechanism $M$ satisfies $(\epsilon, f)-$OCDP, if for any $(D_1,D_2)$ $\in \mathcal{N}(f(.))$, and every set of outputs $S \subseteq range(M)$, we have: 
\begin{equation}
     Pr[M(D_1) \in S] \leq e^\epsilon Pr[M(D_2) \in S]
\end{equation}
\end{definition}

\section{Problem Definition} \label{Problem}
\textcolor{black}{The outlier release encourages the anomaly or discourages it, due to approval or disapproval incentives respectively. In the approval incentive, the outlier benefits from the announcement as their outlying is desired. However, in the latter, the anomaly is undesired and needs to be proven to penalize the outlier. For approval incentive, imagine a funding agency that awards people with outstanding achievement in underrepresented groups. In this example, the agency is allowed to announce the outlier and the privacy-enhanced context, while any individual in that context is able to deny their participation in the funding competition.
For disapproval incentive, imagine an insurance company increasing the annual payment of a client due to the client's abnormal behavior in a group of clients. While the insurance company is required to provide justification for this increase, they are not allowed to reveal information about any other client. 
\begin{table}
    \centering
    \caption{A sample of income data set $D$}
    \begin{tabular}{ | c | c | c | c | c |}
    \hline
    Record & Job title & City & District & Salary \\ \hline \hline
    1 & Medical Doctor & Montreal & Business & $S_1$ \\ \hline
    2 & Lawyer & Toronto & Business & $S_2$ \\ \hline
    3 & CEO & Ottawa & Diplomatic & $S_3$ \\ \hline
    4 & Lawyer & Toronto & Business & $S_4$ \\ \hline
    5 & Lawyer & Ottawa & Diplomatic & $S_5$ \\ \hline
    6 & Medical Doctor & Toronto & Historic & $S_6$ \\ \hline
    7 & Lawyer & Ottawa & Business & $S_7$ \\ \hline
    8 & Lawyer & Ottawa & Diplomatic & $S_8$ \\ \hline
    9 & CEO & Montreal & Historic & $S_9$ \\ \hline
    10 & Medical Doctor & Toronto & Diplomatic & $S_{10}$ \\
    \hline
  \end{tabular}
  \label{Tab: Example}
\end{table}
To clarify how contextual outlier release violates the privacy of the participants, we provide an example of income analysis with concrete attribute values in Table \ref{Tab: Example}. The goal in this example is to reveal the context that covers the largest population for an outlier. $D$ is a dataset with categorical attributes of $Job title =\{CEO, Medical Doctor, Lawyer\}$, $City$ taking values from $\{Montreal, Ottawa, Toronto\}$, and $District$ with the domain $ \{Business, Historic, Diplomatic\}$, and a numerical attribute Salary. Assume that the data owner reveals the record $V$, e.g. record 8 in Table \ref{Tab: Example}, has its highest anomaly significance in the context $Job title \in \{CEO, Lawyer\}$, in $Diplomatic$ district of $Ottawa$. By having a side information about the individuals in the maximum context, e.g. by knowing there is only one $CEO$ living in $Diplomatic$ district of $Ottawa$, the revealed deterministic statement about the outlier $V$, leaks information about the presence of the $CEO$ in the dataset as well.} 

In our setup, we consider a dataset instance, $D$, of a relational schema, $R$. Attributes of $R$ are presented by the set $attr(R) = \{A_1, \cdots, A_m, M\}$; where $M$ is the \textit{metric} attribute we define the outlier with regard to. We denote by $P_{ij}$, a predicate that is defined over the $j^{th}$ value in the domain of a categorical or numerical attribute $A_i$ derived from $attr(R)$. \textcolor{black}{For instance,  in our running example, $P_{23}$ is the third attribute value in the second attribute in $attr(R) = \{Job title , City, District, Salary\}$, which is \emph{City = Toronto}.} The domain of an attribute $A_i$ includes all possible values for $A_i$, regardless of them being covered in the particular instance $D$; we refer to the domain size of $A_i$ by  $|A_i|$. In reporting private contextual outliers, it is necessary to take all the values in the domain of the attributes into account; the reason for this requirement is discussed in Section 4. A predicate $P_{ij}$ filters the tuples from $D$ to a subset that satisfies the predicate. This filtering can be represented as conjunction of disjunction predicates; $[P_{11} \vee P_{12} \vee \cdots \vee P_{1|A_1|}] \land{} \cdots \land{} [P_{m1} \vee P_{m2} \vee \cdots \vee P_{m|A_m|}]$; this format covers many common types of SQL queries.\\
Now, we can define a  \emph{context}, $C$. A context is represented as a binary vector of the form \textcolor{black}{$\langle c_{11}, \cdots, c_{1|A_1|}, \cdots, c_{m1}, \cdots, c_{m|A_m|} \rangle$} with length $t = \sum^m_{i=1}|A_i|$. The bit $c_{ij}$ in $C$ is set to $1$, if the predicate $P_{ij}$ is covered in the context. $C$ filters the dataset $D$ to the population $D_C$. A tuple $V$ belongs to $D_C$, denoted as $V \in D_C$, if $V$ is selected by a set of predicates $P_{ij} , \forall i \in [1, m]$ in $C$, where $1 \leq j \leq |A_i|$. We denote the total number of tuples in $C$'s population as $|D_C|$. \textcolor{black}{In our example of the dataset with categorical attributes of $attr(R) = \{Job title , City, District, Salary\}$ with the attribute domains of $\{CEO, Medical Doctor, Lawyer\}$ for $Job title$, $\{Montreal, Ottawa, Toronto\}$ for $City$, as well as $District$ attributes $\{Business, Historic, Diplomatic\}$, this combination of predicates: $[P_{11} \vee P_{13}] \land{} [P_{23}] \land{} [P_{32}]$, filters the dataset to the $CEO$s and $Lawyer$s in $Toronto$ who live in the $Historic$ district. The corresponding context to this subset of the dataset is represented by $C= \langle 101001010 \rangle$.} Evidently, any non-empty context should include at least one predicate of each attribute, i.e.~the context vector has a minimum Hamming weight\footnote{The Hamming weight of a binary vector is the number of 1's in the vector.}, $m$. We call context $C'$ connected to $C$, if the Hamming distance of $C'$ and $C$ is 1, i.e.~$C'$ is constructable from $C$ by adding or removing (not both) only one predicate. For our running example of $C= \langle 101001010 \rangle$, being $CEO$s and $Lawyer$s in $Toronto$ who live in the $Historic$ district, the context $C'= \langle 100001010 \rangle$ which is $CEO$s in $Toronto$ who live in the $Historic$ district, is connected to $C$.
Now, we can define the contextual outlier release. Given an outlier detection algorithm that takes as input any population $D_C$ and the metric $M$, and outputs all outlier in $D_C$, we can implement the outlier verification method $f_M(D_C,  V)$ to determine if $V$ is an outlier in $C$ with regard to the metric $M$; formally\\
$f_M(D_C,  V) = $
$\begin{cases}
   true, &\quad\text{V is an outlier in $D_C$ w.r.t. M}\\
   false, &\quad\text{Otherwise}\\
\end{cases}$

We call a context $C$, a \textit{matching context} for an outlier $V$ with regard to metric $M$ if and only if $f_M(D_C,  V) = true$.
\begin{definition}{(Contextual Outlier Enumeration, $COE_M$)} \label{def: COE}
Given a dataset $D$ with schema $R$, a metric attribute M, and an outlier verification function $f_M$ for $M$, $COE_M(D,V)$ produces all matching contexts of $V$; i.e.~all $C$'s from the attributes $attr(R)$ such that $V \in D_C$, and $f_M(D_C,  V) = true$.  
\end{definition}
To the best of our knowledge, the problem $COE_M$ defined above is a hard problem with no existing efficient solution \citep{Enumerate, Enumerate2}. Recall from Section 2.3, that each output of $COE_M$ is assigned a score by a utility function, where higher scores indicate more appealing outputs. \textcolor{black}{We focus on two types of utility, $u_V(D,C)$, of a context $C$ for an outlier $V$: i) a quantitative evaluation of the context independent from other contexts. The population size of $C$ is an very common example for this utility type, since it indicates the outlier's significance. ii) an evaluation of the context that may score the context based on its relation with a chosen context. As an example of this utility type, we investigate the overlap of the candidate context with a fixed starting context -- the results are shown in Section \ref{Overlap}. In either case,} the data owner desires to report $V$ with a context that is both private and achieves high utility. In this work, as implied in the definition, we only consider deterministic outlier detection algorithms embedded in $COE_M$.  
\begin{definition}{(Private Contextual Outlier Reporting)}
\textcolor{black}{Given a dataset $D$ with schema $R$, a metric attribute $M$, a local outlier $V$, and a contextual outlier enumeration $COE_M$ for $M$,} private contextual outlier reporting ($PCOR$) produces a context $C$ for $V$ such that
\begin{enumerate}[label=(\alph*)]
    \item $f_M(D_{C},V) = true$ 
    \item $C$ is produced by a differentially private mechanism
    \item $C$ is expected to maximize utility within $COE_M(D,V)$
    \item $C$ is calculated in $\mathcal{O}(p(n))$
\end{enumerate}
\label{def:PCOR}
\end{definition} 
Note that the context $C$ is the only context that is released to explain the outlying of $V$. \textcolor{black}{All the valid contexts calculated for generating $COE_M$} are solely known by the data owner and are used to compute $C$. \\
In the previous section, we elaborated on the outlier detection algorithms to deploy in $f_M(D_C,  V)$, affecting the results of $COE_M$. We also defined differential privacy's~\citep{DP1} requirement for PCOR and presented the Exponential mechanism for implementing it. In Section \ref{OCDPinPCOR} \textcolor{black}{we explain why} we choose Output Constrained Differential Privacy for PCOR. Subsequently, in Section \ref{Utility}, we discuss PCOR's compatibility with any utility function  \textcolor{black}{through the two examples mentioned above}.

\subsection{Outlier-Preserving Privacy in PCOR} \label{OCDPinPCOR}
As shown in Equation \ref{eq:DP}, a mechanism $\mathcal{M}$ provides differential privacy, if the probability of obtaining any of its possible outcomes $S$ changes with a maximum ratio of $e^\epsilon$, if an input dataset $D_1$ is changed to any neighboring dataset $D_2$. Hence, if there is an $S$ in range of $\mathcal{M}$ that has a zero probability of occurrence for a dataset $D_1$ but a non-zero probability for a neighboring dataset $D_2$, the differential privacy guarantee does not hold. The mechanism must range over all $S \subseteq Range(\mathcal{M})$ with non-zero probability. Since PCOR must output a valid context for $V$ as the final answer, it assigns zero probability to all non-valid contexts. This zero probability is a violation of differential privacy, if
adding or removing a record in a dataset $D_1$ changes a valid context $S$ for $V$ to a non-valid context for $V$ in $D_2$. In PCOR, differential privacy is guaranteed by applying the Exponential mechanism to $COE_M(D ,V)$. $COE_M(D ,V)$, as introduced in Definition \ref{def: COE}, outputs the set of all contexts in which $V$ is an outlier so that $f_M(D_{C},V) = true$ -- property (a) in Definition \ref{def:PCOR} -- is satisfied. Therefore, to guarantee differential privacy as introduced in \textcolor{black}{its original form as represented in} Equation \ref{eq:DP}, for any outlier $V$ and any neighboring datasets $D_1$ and $D_2$, the equality $COE_M(D_1 ,V) = COE_M(D_2 ,V)$ must hold. In other words, adding or removing a record, should not change the set of valid contexts for a particular outlier $V$. \textcolor{black}{This condition is too strict for outlier detection}, and as we show in our experiments in Section \ref{GroupPriv}, there exist several neighboring datasets for various outlier detection algorithms that violate this constraint. \textcolor{black}{Hence, we need some notion of differential privacy with satisfiable requirements for outlier detection algorithms, and we need to show that the shift to this notion, does not sacrifice the privacy guarantee that we are looking for.} Since violating the constraint in the original differential privacy notion is algorithm-dependent, we propose a relaxed notion of (Output Constrained) differential privacy that is algorithm-dependent as well\footnote{Note that the common notion of differential privacy, $(\epsilon, \delta) -DP$, is not algorithm dependent.}. This notion guarantees $\epsilon-$differential privacy when the equality $COE_M(D_1 ,V) = COE_M(D_2 ,V)$ for the outlier detection algorithm holds. We ran several experiments \textcolor{black}{ for the next two observations: i) to what extent the $COE_M(D_1 ,V) = COE_M(D_2 ,V)$ assumption in OCDP matches the results of outlier detection algorithms in practice, ii) what are the effects if this assumption does not hold and whether it results in a privacy sacrifice}. Our results include evaluations of this assumption in group privacy, where the datasets $D_1$ and $D_2$ differ in more than one record. We ran experiments on the outlier detection algorithms introduced earlier in Section \ref{OutlierAlgs}, and provide the results in Section \ref{GroupPriv}.
\subsection{Utility Functions} \label{Utility} 
As described in Section 2.3, the differential privacy level in a mechanism is directly related to the sensitivity of the utility function. We discussed in Section 2.3 that in order to support reasonable levels of privacy, we aim for utility functions with sensitivity close to 1. We introduce two such utility functions here: i) Maximizing the context population size, and ii) Maximizing the overlap with \textcolor{black}{the starting} context.
\subsubsection{Context Population Size}
The outlier verification function, $f$, can return \textit{true} on various contexts for a tuple $V$. Intuitively, a larger context population indicates a higher significance of the outlier.
\begin{definition}{(Maximum Context)}
$C$ is the Maximum context for outlier $V$ with regard to $f_M$ if and only if
\begin{enumerate}[label=(\roman*)]
    \item $f_M(D_C,  V)$ returns true
    \item $\forall C'$ s.t. $f_M(D_{C'},  V)= True$: $|D_C| \geq |D_{C'}|$
\end{enumerate}
\end{definition}
By choosing the population size as utility score, the mechanism guarantees the output to be \textit{close enough} to the highest score answer, i.e.~the maximum context. Formally,\\
$u_V(D,C) = $
$\begin{cases}
   -\infty , &\quad f_M(D_C,  V) = false\\
   |D_C|, &\quad\text{Otherwise}\\
\end{cases}$

The utility of $-\infty$ is assigned to non-valid contexts, so that it has zero probability to be picked by the Exponential mechanism. Furthermore, since the population size would differ by at most 1, if replacing the dataset with a neighboring one, the utility function has sensitivity of 1. 
\subsubsection{Overlap with the \textcolor{black}{Starting} Context}
\textcolor{black}{Another subject of interest in outlier analysis is a utility function that scores a context based on its relation to a chosen context. For this category of utility functions, we focus on one that scores a context $C$ based on its population's intersection with the population of \textcolor{black}{a chosen/starting} context for $V$, $C_V$.} Formally,\\
$u_V(D,C) = $
$\begin{cases}
   -\infty, &\quad f_M(D_C,  V) = false\\
   |D_C \cap D_{C_V}| , &\quad\text{Otherwise}\\
\end{cases}$

Note that, changing a record in the dataset $D$ alters the score, i.e.~the intersection size, by at most 1 (sensitivity). As mentioned earlier, PCOR utilizes the Exponential mechanism to provide differential privacy. We start with the direct application of the mechanism and argue why it imposes a substantial computational complexity to PCOR. To resolve the issue, we propose applying a sampling layer prior to deploying the the Exponential mechanism. \textcolor{black}{However,} as we explain in Section \ref{Uniform_Sampling}, sampling in its basic form does not reduce the complexity of the direct approach. To improve the sampling method, we map the contexts and their connections to a graph. Searching over the graph, we observe ``locality" in the matching contexts and use this property to propose efficient sampling methods for our PCOR. A random walk on the graph shows significant superiority over the basic sampling approach. To improve the sampling even further, we use the utility to direct the search and introduce differentially private versions of Depth-First and Breadth-First search algorithms for sampling. This results in a private scalable design for PCOR that achieves high utility. 
\section{Direct Approach for PCOR}\label{Direct}
In the direct application of the Exponential mechanism in OCDP, we apply the mechanism over the outlier enumeration algorithm, as described in Algorithm~\ref{alg:ExpOD}. As we mentioned in Section \ref{Problem}, in order to provide more privacy protection for the individuals it is important that the enumeration algorithm considers all the values in the domain of attributes in $attr(R)$, not just the values covered in the dataset $D$. We show the rationality behind this requirement through the example in  \textcolor{black}{Section \ref{Problem}}. The data owner reports the record $V$ is an outlier in the context $Job title \in \{CEO, Lawyer\}$, in $Diplomatic$ district of $Ottawa$. By enumerating just over the values in the dataset, this report reveals that the individuals in the context are either CEOs or Lawyers. However, by enumerating over all possible values in the domain, \textcolor{black}{the context description will be larger, e.g.  $Job title \in \{CEO, Lawyer, CFO, Diplomat\}$, in $Diplomatic$ district of $Ottawa$. This report leaves it unclear that which of the attribute values CEO, Lawyer, CFO, Diplomat is in the dataset.}
\begin{algorithm}[ht]
    \SetKwInOut{Input}{Input}
    \SetKwInOut{Output}{Output}
    \Input{$D$, $attr(R)$, $V$, $u$, $\epsilon_1$}
    \Output{$C_{p}$}
    $C_M=\emptyset$ \\
    \Comment{Ranging over all possible contexts, to find matches}\\
    \For {$C$ $\subseteq attr(R)$}         
    {\If{$f_M(D_C,  V) = true$}{$C_M \gets C_M \cup C$}}
    \Comment{The Exponential mechanism}\\
    return $C_p \gets Exp^{\epsilon_1}_u(D, C_M)$\;
    \caption{PCOR - Direct Approach}
    \label{alg:ExpOD}
    \textbf{Comment:} A context $C$ is a subset of \textcolor{black}{the values of the attributes in $attr(R)$}. All $C$'s included in $COE_M(D,V)$ are candidates for the private output, but with a probability determined by the utility function $u$. The mechanism drawing from the candidates set terminates the algorithm. 
\end{algorithm}
\begin{theorem} Algorithm~\ref{alg:ExpOD} satisfies $(\epsilon=2\epsilon_1, COE_M(. ,V))-OCDP$, according to Definition~\ref{def: OCDP}. 
\label{thm: Pproof1}
\end{theorem}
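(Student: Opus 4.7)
The plan is to reduce the OCDP guarantee for Algorithm~\ref{alg:ExpOD} to the standard differential-privacy guarantee of the Exponential mechanism (Theorem~\ref{thm: ExpPriv}), by exploiting the fact that the $f$-neighbor relation $\mathcal{N}(COE_M(.,V))$ of Definition~\ref{def:fneighbor} forces the candidate set built in lines~2--5 to be identical on $D_1$ and $D_2$.

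First I would unfold Definition~\ref{def: OCDP} for $f = COE_M(.,V)$: it suffices to bound $Pr[M(D_1)\in S]/Pr[M(D_2)\in S]$ by $e^{2\epsilon_1}$ for every $S \subseteq \text{range}(M)$ and every pair $(D_1, D_2) \in \mathcal{N}(COE_M(.,V))$. Using condition (1) of Definition~\ref{def:fneighbor}, $COE_M(D_1, V) = COE_M(D_2, V)$, so the enumeration loop produces the same set $C_M$ on both datasets. Consequently, the call $Exp_u^{\epsilon_1}(D, C_M)$ in the last line is, in both runs, an instance of the Exponential mechanism over the \emph{same} discrete range, and the output distributions differ only through the utility scores $u(D_1, \cdot)$ versus $u(D_2, \cdot)$.

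Next I would invoke Theorem~\ref{thm: ExpPriv} on this fixed range. Condition (2) of Definition~\ref{def:fneighbor} gives $(D_1, D_2) \in \mathcal{N}$, i.e., they are standard neighbors, so the sensitivity bound on $u$ applies; for the utility functions introduced in Section~\ref{Utility} (population size and overlap with the starting context) we have $\Delta u \leq 1$, and Theorem~\ref{thm: ExpPriv} gives $2\epsilon_1 \Delta u \leq 2\epsilon_1$ differential privacy for the mechanism viewed over the fixed range $C_M$. Since this bound holds uniformly over all $(D_1, D_2) \in \mathcal{N}(COE_M(.,V))$, Definition~\ref{def: OCDP} is satisfied with $\epsilon = 2\epsilon_1$ and $f = COE_M(.,V)$.

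The main obstacle, and really the only substantive observation in the argument, is to make explicit that the $f$-neighbor constraint is exactly what rescues the Exponential-mechanism analysis when the candidate set $C_M$ would otherwise be data-dependent. Without the equality $COE_M(D_1, V) = COE_M(D_2, V)$, the normalizing sums in the Exponential-mechanism probability expression would range over different index sets and the usual telescoping of the ratio $Pr[M(D_1)=r]/Pr[M(D_2)=r]$ would break; condition (1) of Definition~\ref{def:fneighbor} collapses both sums to one over $C_M$, which is precisely why Section~\ref{OCDPinPCOR} motivated moving from plain DP to OCDP in the first place.
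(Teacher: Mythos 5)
Your proposal is correct and follows essentially the same route as the paper: both arguments hinge on condition (1) of Definition~\ref{def:fneighbor} forcing $C_M$ to coincide on the two datasets, together with the sensitivity bound $\Delta u \leq 1$ for the utility. The only difference is presentational --- you invoke Theorem~\ref{thm: ExpPriv} as a black box over the now-fixed range $C_M$, whereas the paper re-derives the same $e^{2\epsilon_1 \Delta u}$ bound by explicitly splitting the ratio into the numerator term and the ratio of normalizing sums.
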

\begin{proof}
Consider the neighboring pair $(D_1, D_2) \in N(f(.))$, for $COE_M$ being the outlier enumeration used in Algorithm \ref{alg:ExpOD}. According to Definition~\ref{def:fneighbor}, $D_1$ and $D_2$ differ in one record and result in the same range of outputs, $C_M$. To prove the algorithm satisfies the claim in the Theorem~\ref{thm: Pproof1}, we need to show that the probability of algorithm outputting a particular $C^*_p$ for $D_1$ is $2\epsilon_1$-different from $D_2$.
\begin{multline}
    \frac{Pr[Alg1(D_1) = C^*_p]}{Pr[Alg1(D_2) = C^*_p]} = \frac{\frac{e^{{\epsilon_1} u(C^*_p, D_1)}}{\sum_{c \in C_M}{e^{{\epsilon_1} u(c, D_1)}}}}{\frac{e^{{\epsilon_1} u(C^*_p, D_2)}}{\sum_{c \in C_M'}{e^{{\epsilon_1} u(c, D_2)}}}} \\= \frac{e^{{\epsilon_1} u(C^*_p, D_1)}}{e^{{\epsilon_1} u(C^*_p, D_2)}} \times \frac{\sum_{c \in C_M'}{e^{{\epsilon_1}  u(c, D_2)}}}{\sum_{c \in C_M}{e^{{\epsilon_1} u(c, D_1)}}}
\label{eq: basicproof}
\end{multline}
Since $|D_1| - |D_2| = 1$, any context selected for $D_1$ loses in maximum one record when selected for $D_2$. The utility for each context is proportional to its population size, hence \textcolor{black}{its sensitivity is at most 1, i.e. $\Delta u \leq 1$. This means that that for any context in R (including $C_p$) the relation $u(c, D_1) - u(c, D_2) \leq \Delta u \leq 1$ holds. Hence the Equation~\ref{eq: basicproof} simplifies to} 
\begin{equation}
\frac{Pr[Alg1(D_1) = C^*_p]}{Pr[Alg1(D_2) = C^*_p]} \leq e^{\epsilon_1 \Delta u} \times e^{\epsilon_1 \Delta u} = e^{2\epsilon_1 \Delta u} \leq e^{2\epsilon_1}
\label{eq: basicproof2}
\end{equation}
Note that the number of elements in $C_M$ is the same as $C'_M$, due to the neighboring definition for output constrained differential privacy. However, the population size for the contexts in $C_M$ and $C'_M$ might differ in one record. 
\end{proof}  
\begin{theorem} The computation complexity of Algorithm~\ref{alg:ExpOD} is $\mathcal{O}(2^{t+1})$, where $t$ is the total number of attribute values.
\label{thm: Comp1}
\end{theorem}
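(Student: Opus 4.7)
The plan is to bound the runtime by tracking the two successive passes through the space of contexts performed in Algorithm~\ref{alg:ExpOD}: first the enumeration loop that builds the matching-context set $C_M$, and then the draw from $Exp^{\epsilon_1}_u(D, C_M)$, which itself must normalize over $|C_M|$ candidates.

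First, I would argue that the cardinality of the candidate space is $2^t$. Since a context is a binary vector $\langle c_{11},\dots,c_{m|A_m|}\rangle$ of length $t=\sum_{i=1}^{m}|A_i|$, and the algorithm (as noted in Section~\ref{Problem} and reiterated at the top of Section~\ref{Direct}) must range over all attribute values in the \emph{domain}, the \texttt{for} loop over $C\subseteq attr(R)$ executes exactly $2^t$ iterations. Treating the single outlier-verification call $f_M(D_C,V)$ as a unit of work (it depends only on the chosen deterministic detector and the fixed input $D$, not on $t$), the enumeration phase therefore contributes $\mathcal{O}(2^t)$ operations. This also establishes $|C_M|\le 2^t$.

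Next, I would account for the Exponential mechanism step $Exp^{\epsilon_1}_u(D,C_M)$. Instantiating its definition requires (i) evaluating $u(D,c)$ for every $c\in C_M$, (ii) forming the denominator $\sum_{c\in C_M}\exp(\epsilon_1 u(c,D)/(2\Delta u))$, and (iii) sampling one element according to the induced distribution; each of these is linear in $|C_M|$, so the sampling phase also costs $\mathcal{O}(|C_M|)=\mathcal{O}(2^t)$.

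Adding the two phases gives $\mathcal{O}(2^t)+\mathcal{O}(2^t)=\mathcal{O}(2\cdot 2^t)=\mathcal{O}(2^{t+1})$, which is the claimed bound. The argument is essentially a counting one, so there is no substantive obstacle; the only care point is justifying that the per-iteration cost of $f_M$ (and of computing $u$, which under both utility functions of Section~\ref{Utility} is a single pass over $D$) can be absorbed into the constant so the exponential factor in $t$ is the dominant term. I would state this assumption explicitly, since the theorem's purpose is precisely to motivate the sampling-based speed-up pursued in Section~\ref{Uniform_Sampling} and beyond.
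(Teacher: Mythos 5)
Your proposal matches the paper's own argument: the paper likewise decomposes the cost into the $\mathcal{O}(2^t)$ enumeration over all binary context vectors of length $t$ and an additional $\mathcal{O}(2^t)$ pass for computing the Exponential-mechanism weights over the candidates, adding them to get $\mathcal{O}(2^{t+1})$. Your explicit remark that $f_M$ and $u$ are treated as unit-cost per context is a slightly more careful statement of an assumption the paper leaves implicit, but the proof route is the same.
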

\begin{proof}
Direct application of the Exponential mechanism requires brute forcing over all possible contexts for $V$ (lines 3-7 in Algorithm~\ref{alg:ExpOD}) and find the matching ones. With a context being a binary vector \textcolor{black}{$\langle c_{11}, \cdots, c_{1|A_1|}, \cdots, c_{m1}, \cdots, c_{m|A_m|} \rangle$} of length $t = \sum^m_{i=1}|A_i|$, this phase is $\mathcal{O}(2^t)$. It also calculates the weight of each candidate (lines 3-7), which is  $\mathcal{O}(2^t)$ as well. Hence, Algorithm~\ref{alg:ExpOD} is $\mathcal{O}(2^{t+1})$.
\end{proof}
\section{Sampling Approach for PCOR}
To resolve the efficiency problem in PCOR, we propose deploying a sampling layer, prior to applying the Exponential mechanism. We start with Uniform Sampling, and analyze its privacy and complexity of the algorithm. 
\subsection{Uniform Sampling}\label{Uniform_Sampling}
The first sampling algorithm we evaluate, picks contexts uniformly from the set of all valid contexts. We apply the Exponential mechanism after sampling. The idea of utilizing Uniform Sampling prior to the Exponential mechanism was investigated by Lantz et al.~\citep{SubExp}. 
\begin{algorithm}
    \SetKwInOut{Input}{Input}
    \SetKwInOut{Output}{Output}
    \Input{$D$ , $attr(R)$, $V$, $u$, $\epsilon_1$, $p$ ($p=\frac{1}{2}$ here)}
    \Output{$C_p$}
    \Comment{Finding $n$ matching contexts}\\
    $C_M=\emptyset$ \\
    \While {$|C_M| \leq n$}         
    {\For{$i \leq t$}{$C[i] =
     \begin{cases}
       \text{1,} &\quad\text{w.p. p}\\
       \text{0,} &\quad\text{w.p. 1-p}\\
     \end{cases}$}
    \If{$f_M(D_C,  V) = true$}{$C_M \gets C_M \cup C$}}
    \Comment{The Exponential mechanism on the samples}\\
    return $C_p \gets Exp^{\epsilon_1}_u(D, C_M)$\;
    \caption{PCOR - Uniform Sampling}
    \label{alg:Unismpl}
    \textbf{Comment:} We form the context vector, by setting one's and zero's randomly. In the general case the probabilities are correspondingly $p$ and $1-p$. We consider $p=\frac{1}{2}$ to achieve Uniform Sampling. We obtain the $C_M$ from the sampling, then apply Exp. mechanism as in Algorithm \ref{alg:ExpOD}.
\end{algorithm}
\begin{theorem}
\textcolor{black}{The} Uniform Sampling described in Algorithm~\ref{alg:Unismpl} satisfies $(\epsilon = 2\epsilon_1, COE_M(. ,V))-OCDP$. 
\label{thm:Unismpl}
\end{theorem}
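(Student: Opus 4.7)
The plan is to reduce this statement to the privacy guarantee already established for the direct approach (Theorem 4.1), by separating the algorithm into two phases: (i) the rejection-sampling loop that builds $C_M$, and (ii) the subsequent application of the Exponential mechanism on $C_M$. The key observation is that by Definition~\ref{def:fneighbor}, any $(D_1,D_2) \in \mathcal{N}(COE_M(.,V))$ satisfy $COE_M(D_1,V) = COE_M(D_2,V)$, so the set of valid contexts — and hence the acceptance criterion $f_M(D_{C}, V)=\text{true}$ used inside the loop — is identical for the two datasets.

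First, I would formalize that the sampling loop is \emph{dataset-oblivious} conditioned on the $f$-neighbor relation. Each candidate context $C$ is drawn from the same product Bernoulli$(1/2)$ distribution regardless of $D_i$, and accepted exactly when $C \in COE_M(D_i,V)$; since the two valid sets coincide, the joint distribution of the collected sample $C_M$ (as a set of $n$ valid contexts) is the same under $D_1$ and $D_2$. Consequently, for any fixed realization $\mathcal{C}$ of $C_M$, we have $\Pr[C_M = \mathcal{C} \mid D_1] = \Pr[C_M = \mathcal{C} \mid D_2]$.

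Second, I would condition on $C_M = \mathcal{C}$ and analyze the Exponential mechanism call $Exp^{\epsilon_1}_u(D_i, \mathcal{C})$. Here the range is the fixed finite set $\mathcal{C}$, and the only data-dependent quantity is the utility $u(c, D_i)$, which by the utility-function constructions in Section~\ref{Utility} has sensitivity $\Delta u \le 1$. Mirroring the calculation in Theorem~\ref{thm: Pproof1}, the per-context exponent shifts by at most $\epsilon_1 \Delta u$ in the numerator and the normalizer shifts by at most $\epsilon_1 \Delta u$ in the denominator, giving
\begin{equation*}
\frac{\Pr[Exp^{\epsilon_1}_u(D_1,\mathcal{C}) = C_p^*]}{\Pr[Exp^{\epsilon_1}_u(D_2,\mathcal{C}) = C_p^*]} \;\le\; e^{2\epsilon_1 \Delta u} \;\le\; e^{2\epsilon_1}.
\end{equation*}

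Finally, I would combine the two phases by the law of total probability: for any output $C_p^*$,
\begin{equation*}
\Pr[\mathrm{Alg2}(D_i) = C_p^*] \;=\; \sum_{\mathcal{C}} \Pr[C_M = \mathcal{C} \mid D_i]\cdot \Pr[Exp^{\epsilon_1}_u(D_i,\mathcal{C}) = C_p^*],
\end{equation*}
and since the first factor is identical across $i \in \{1,2\}$ while the second satisfies the bound above termwise, the overall ratio is also bounded by $e^{2\epsilon_1}$, yielding $(\epsilon = 2\epsilon_1, COE_M(.,V))$-OCDP. The main subtlety — and the step I expect to require the most care — is justifying the identical distribution of $C_M$ under the $f$-neighbor assumption; without the equality $COE_M(D_1,V)=COE_M(D_2,V)$, the rejection step itself would leak information and break the reduction, which is precisely why the output-constrained relaxation of DP is needed rather than vanilla DP.
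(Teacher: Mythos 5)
Your proposal is correct and follows essentially the same route as the paper's proof: decompose the output probability into the sampling phase (whose distribution is identical under $D_1$ and $D_2$ because the $f$-neighbor condition forces $COE_M(D_1,V)=COE_M(D_2,V)$, and candidate generation depends only on attribute values) and the Exponential mechanism phase, which contributes the factor $e^{2\epsilon_1 \Delta u}\leq e^{2\epsilon_1}$ exactly as in Theorem~\ref{thm: Pproof1}. Your version is in fact slightly tighter in bookkeeping, since you condition on the full realization of $C_M$ and sum via the law of total probability rather than only on the event $C^*_p \in C_M$, but the underlying argument is the same.
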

\begin{proof}
\begin{multline}
    \frac{Pr[Alg2(D_1) = C^*_p]}{Pr[Alg2(D_2)= C^*_p]} \\= \frac{\frac{e^{{\epsilon_1} u(C^*_p, D_1)}}{\sum_{c \in C_M}{e^{{\epsilon_1} u(c, D_1)}}} \times Pr_{D_1}[C^*_p \in C_M]}{\frac{e^{{\epsilon_1} u(C^*_p, D_2)}}{\sum_{c \in C_M'}{e^{{\epsilon_1} u(c, D_2)}}}\times Pr_{D_2}[ C^*_p \in C_M]} \\
\leq e^{{\epsilon_1} \Delta u} \times e^{{\epsilon_1} \Delta u} \times 1 = e^{2{\epsilon_1} \Delta u}
\label{eq: uniproof}
\end{multline}
The first two items in the inequality in Equation~\ref{eq: uniproof} follows the same reasoning as the proof in Theorem~\ref{thm: Pproof1}. The $Pr[C^*_p \in C_M]$ is independent of the database being $D_1$ or $D_2$, as the contexts are chosen based on attribute values not the records, and the attribute values for $D_1$ and $D_2$ are the same. 
\end{proof}
\begin{theorem} The computation complexity of Algorithm~\ref{alg:Unismpl} is $\mathcal{O}(2^t)$, where $t$ is the total number of attribute values.
\label{thm: CompUni}
\end{theorem}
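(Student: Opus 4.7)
The plan is to decompose Algorithm~\ref{alg:Unismpl} into its two phases, the rejection-sampling \textbf{while}-loop (lines 3--8) and the final Exponential mechanism call (line 10), and then to argue that the first phase dominates and contributes the full $\mathcal{O}(2^{t})$ cost.

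First I would analyze the cost of a single iteration of the \textbf{while}-loop. Setting each of the $t$ bits of $C$ independently to $1$ with probability $p=\tfrac12$ costs $\mathcal{O}(t)$, and then $f_M(D_C,V)$ is invoked. The sampling distribution on $\{0,1\}^{t}$ is exactly uniform, so the acceptance probability of a single trial equals $|C_M^{*}|/2^{t}$, where $C_M^{*}$ is the full matching-context set produced by $COE_M(D,V)$. By a standard geometric-distribution argument (or linearity of expectation over $n$ independent matches), the expected number of trials required to reach $|C_M|\geq n$ is $n\cdot 2^{t}/|C_M^{*}|$.

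Next I would take the worst case over instances. Because $C_M^{*}$ is not structured a priori, it can be as sparse as $|C_M^{*}|=1$ (e.g., only one valid context makes $V$ an outlier), so the expected number of trials degrades to $\Theta(n\cdot 2^{t})$. Treating $n$ as a constant (or at least as independent of $t$) and bounding the per-iteration cost by $\mathcal{O}(t)$ plus the $\mathrm{poly}(|D|)$ cost of evaluating $f_M$, the loop is $\mathcal{O}(2^{t})$ asymptotically. I would then observe that the closing Exponential-mechanism step operates on only $n$ retained candidates and so runs in $\mathcal{O}(n)$, which is absorbed. Summing these contributions yields the claimed $\mathcal{O}(2^{t})$ bound.

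The main obstacle is making the worst-case argument tight and convincing: one must justify that no smarter analysis of blind uniform sampling can beat the $|C_M^{*}|/2^{t}$ acceptance probability, since the sampler has no information about where in the Boolean cube the matching contexts lie. This is precisely the pessimistic takeaway the theorem is meant to convey, motivating the locality-exploiting graph-search samplers introduced immediately afterward; accordingly I would phrase the proof so that the bound is presented as essentially matching the direct approach of Theorem~\ref{thm: Comp1}, highlighting that uniform sampling alone does not deliver the desired polynomial-time reduction demanded in Definition~\ref{def:PCOR}(d).
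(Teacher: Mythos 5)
Your proposal matches the paper's own argument essentially step for step: both compute the acceptance probability $N/2^{t}$ (your $|C_M^{*}|/2^{t}$), derive the expected number of trials $n\cdot 2^{t}/N$ needed to collect $n$ matching contexts (the paper via the binomial expectation, you via the equivalent geometric/negative-binomial view), add the $\mathcal{O}(n)$ cost of the final Exponential mechanism, and conclude $\mathcal{O}(2^{t})$ with $n$ treated as constant (the paper treats $N$ as constant where you take the worst case $N=1$, an immaterial difference). The only cosmetic divergence is your explicit $\mathcal{O}(t)$ per-iteration bit-setting cost, which the paper's unit-cost accounting (counting context evaluations, as in Theorem~\ref{thm: Comp1}) simply suppresses.
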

\begin{proof}
We calculate the complexity of Algorithm~\ref{alg:Unismpl} in lines 3-10 first. The algorithm keeps sampling among all $2^t$ contexts until it finds $n$ matching contexts for $V$. We want to find out how many contexts the algorithm should sample on average to achieve the goal. Assume the total number of matching contexts for $V$ is $N$; i.e.~the probability of a sample being a matching context is $\frac{N}{2^t}$. Hence, the number of success in sampling follows Binomial distribution $B(x, \frac{N}{2^t})$, with $x$ being the total number of samples~\citep{Probability}. Since the expected value of the number of success in Binomial distribution in this case is $x \times \frac{N}{2^t}$, we need $n \times \frac{2^t}{N}$ samples to obtain $n$ matching contexts on average. Furthermore, Algorithm~\ref{alg:Unismpl} runs the Exponential mechanism on the matching contexts, which adds $O(n)$ complexity. Therefore, Uniform Sampling is $\mathcal{O}(\frac{n \times 2^t}{N} +n)$. Considering $n$ and $N$ being constant values, Uniform Sampling does not effectively change the complexity of the direct approach.  
\end{proof}    
\subsection{Graph-based Sampling} 
We assume the data owner knows a valid \textcolor{black}{starting} context\footnote{The data owner can obtain this context through an initial search.} $C_V$ for a local outlier $V$ and desires to find the private maximum one through PCOR. We map this problem to a search over a graph $G$ \textcolor{black}{initiating form $V$'s starting context} $C_{V}$, aiming to reach the maximum context. We define the context graph with the set of vertices $Vtx = \{C\}$, where $\{C\}$ is all possible contexts defined over the attribute values from $attr(R)$. There is an edge between two contexts $C$ and $C'$ if they are \textit{connected} to each other, i.e.~they are different in the presence of just one attribute value. Every context $C$ can be represented as a binary vector of the form \textcolor{black}{$\langle c_{11}, \cdots, c_{1|A_1|}, \cdots, c_{m1}, \cdots, c_{m|A_m|} \rangle$} with length $t = \sum^m_{i=1}|A_i|$. As a result, flipping a bit in $C$ forms a binary vector for context $C'$ which is different from $C$ in just one (added/removed) attribute value. Hence, there are $t$ number of $C'$s connected to $C$; i.e.~each vertex is of degree $t$. Given that the connected vertices are different in just one attribute value, we hypothesize that if $V$ is an outlier in $C$, then it is more probable to be an outlier in a connected vertex than some randomly chosen vertex among Vtx. In other words, we hypothesize the existence of \textit{locality} in outlier detection algorithms. 
We show in this paper the hypothesis holds for algorithms from any outlier detection category. We start by our first sampling algorithm on the context graph: Random Walk. 
\subsubsection{Random Walk Sampling}
The data owner knows a valid context for a local outlier $V$, namely \textcolor{black}{starting} context $C_V$, and desires to find the private maximum one. In random walk sampling, we initiate from the \textcolor{black}{starting} context and change the attribute values to obtain the next context to continue the chain. The change in the context in each iteration includes adding/removing an attribute value to/from the context.
\begin{algorithm}[ht]
    \SetKwInOut{Input}{Input}
    \SetKwInOut{Output}{Output}
    \Input{$D$ , $attr(R)$, $V$, $C_V$, $u$, $\epsilon_1$}
    \Output{$C_p$}
    $C_M=[C_V]$ \\
    \Comment{$n$ is the total number of samples}\\
    \While {$ |C_M| \leq n$ $\&\&$ $C_{conn} \neq \emptyset$}{    
    $C \gets C_V$\\
    $C_{conn} = $ \{connected vertices to $C$\} \\
    \Comment{$C_i$ is selected randomly from $C_{conn}$}\\
    $C_i \xleftarrow{r} C_{conn}$\\
    \If{$f_M(D_{C_i},  V) = true$}{$C_M \gets C_M \cup C_i$\\ $C_V \gets C_i$}\Else{$C_{conn}.remove(C_i)$}}
    \Comment{The Exponential mechanism on the samples}\\
    return $C_p \gets Exp^{\epsilon_1}_u(D, C_M)$\;
    \caption{PCOR - Random Walk Sampling}
    \label{alg:Rptsmpl}
    \textbf{Comment:} The random walk adds/removes a random attribute value to/from \textcolor{black}{a starting} context to obtain the next context, the algorithm continues to do so until the next context is a matching one for $V$, this context is the second sample. We continue random walking from the obtained matching context, until we reach the desired number of sampled contexts in the multiset $C_M$. The the Exponential mechanism is applied afterwards as in Alg.~\ref{alg:ExpOD} to the samples to select the final differentially private output. 
\end{algorithm}
\begin{theorem}
The random walk sampling described in Algorithm~\ref{alg:Rptsmpl} satisfies $(\epsilon= 2\epsilon_1, COE_M(. ,V))$--OCDP, where  $\epsilon_1$ is the privacy parameter in the Exponential mechanism. 
\label{thm:RWalk}
\end{theorem}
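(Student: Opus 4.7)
The plan is to mirror the structure of the proofs of Theorem~\ref{thm: Pproof1} and Theorem~\ref{thm:Unismpl}, but now paying attention to the fact that the sampling step is a random walk on the context graph rather than an i.i.d.\ uniform draw. Fix an arbitrary target output $C^*_p$ and an $f$-neighboring pair $(D_1,D_2)\in\mathcal{N}(COE_M(.,V))$. By Definition~\ref{def:fneighbor} we have $COE_M(D_1,V)=COE_M(D_2,V)$; call this common set of valid contexts $C_M^*$. I would first condition on the final multiset of samples $C_M$ returned by the sampling loop, writing
\begin{equation*}
Pr[Alg3(D_i)=C^*_p]\;=\;\sum_{C_M}\,Pr_{D_i}[\,\text{sample yields }C_M\,]\cdot Pr[\,Exp^{\epsilon_1}_u(D_i,C_M)=C^*_p\,].
\end{equation*}

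The key step is to show that, under the $f$-neighboring assumption, the sampling distribution is \emph{identical} on $D_1$ and $D_2$, i.e.\ $Pr_{D_1}[\text{sample yields }C_M]=Pr_{D_2}[\text{sample yields }C_M]$ for every $C_M$. The random walk of Algorithm~\ref{alg:Rptsmpl} depends only on three things: the starting context $C_V$, the context graph $G$ (whose vertices and edges are determined by $attr(R)$, not by the records of $D$), and, at every iteration, the validity test $f_M(D_{C_i},V)$ that decides whether the candidate neighbor is accepted or rejected. Since $COE_M(D_1,V)=COE_M(D_2,V)=C_M^*$, for every context $C_i$ encountered along the walk the value of $f_M(D_{C_i},V)$ is the same under $D_1$ and $D_2$ (true iff $C_i\in C_M^*$). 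Therefore every accept/reject decision, and hence every transition probability along the walk, coincides for the two databases, yielding the claimed equality.

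Once the sampling factor cancels, the ratio $Pr[Alg3(D_1)=C^*_p]/Pr[Alg3(D_2)=C^*_p]$ reduces to a convex combination (over $C_M$) of ratios of Exponential-mechanism probabilities applied to the same support set $C_M\subseteq C_M^*$. For each such $C_M$, the argument of Theorem~\ref{thm: Pproof1} gives
\begin{equation*}
\frac{Pr[Exp^{\epsilon_1}_u(D_1,C_M)=C^*_p]}{Pr[Exp^{\epsilon_1}_u(D_2,C_M)=C^*_p]}\;\le\;e^{\epsilon_1\Delta u}\cdot e^{\epsilon_1\Delta u}\;=\;e^{2\epsilon_1\Delta u}\;\le\;e^{2\epsilon_1},
\end{equation*}
using $\Delta u\le 1$ for both utility functions introduced in Section~\ref{Utility}. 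Since each conditional ratio is bounded by $e^{2\epsilon_1}$, so is the mixture, establishing $(\epsilon=2\epsilon_1,COE_M(.,V))$-OCDP.

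I expect the main obstacle to be the bookkeeping for the sampling-distribution equality: Algorithm~\ref{alg:Rptsmpl} accepts a candidate, updates $C_V$, and continues the walk from the accepted vertex, but on a rejection it removes the candidate from $C_{\text{conn}}$ and redraws, so one has to argue that the entire trajectory (including rejections) has a well-defined probability under each $D_i$ and that these probabilities agree term by term. This follows from the observation above that each individual accept/reject event is a deterministic function of $(C_i, C_M^*)$, but it is worth spelling out carefully so that the independence of the sampling step from the record-level content of $D$ is unambiguous; everything else is a direct reuse of the Exponential-mechanism bound already established.
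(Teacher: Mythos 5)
Your proposal follows essentially the same route as the paper's proof: factor the output probability into the random-walk sampling part and the final Exponential-mechanism selection, observe that the walk's accept/reject decisions depend only on the common set $COE_M(D_1,V)=COE_M(D_2,V)$ (plus the record-independent graph structure and starting context), so the sampling distribution is identical on the two datasets, and then bound the Exponential-mechanism ratio by $e^{2\epsilon_1\Delta u}\leq e^{2\epsilon_1}$. Your explicit conditioning on every possible sample multiset $C_M$ is a slightly tidier piece of bookkeeping than the paper's ``worst case, single path'' phrasing, but the underlying argument is the same.
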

\begin{proof}
To calculate the privacy provided by an algorithm, we evaluate the probability change of the privacy mechanism outputting a particular value when switching from a database $D_1$ to a neighbor $D_2$. These probabilities are shown by $Pr[Alg3(D) = C^*_p]$ and $Pr[Alg3(D_2) = C^*_p]$ in this case respectively. A closer look to the algorithm, reveals that for $C^*_p$ to be an output for $D_1$, it must be selected by the Exponential mechanism from $C_M$. Hence, 
\begin{multline}
    Pr_{D_1}[Alg3(D_1) = C^*_p]= Pr_{D_1}[C^*_p|C_M] \times Pr_{D_1}[C_M]\\ = Pr_{D_1}[C^*_p|C_M] \times \Pi^{n}_{i=1} Pr_{D_1}[C_i|C_{i-1}] \times  Pr_{D_1}[C_V]
    \label{eq: Chn3}
\end{multline}
Now, we calculate the probability of Algorithm~\ref{alg:Rptsmpl} resulting in the same final answer ($C^*_p$) for the two databases $D_1$, $D_2$. Let $C_M =\{c_1, c_2, \cdots, c_c, \cdots, c_{p}\}$ be the path of contexts sampled by Algorithm~\ref{alg:Rptsmpl} for database $D_1$. To update the $C_M$ from any path of contexts, e.g. $\{c_1, c_2, \cdots, c_i\}$, to another path that is different from the former in only the last context, i.e.~$\{c_1, c_2, \cdots, c_{i+1}\}$, two events need to take place. First, the predecessor context of $c_{i+1}$ be selected from the $i$ elements in the $C_M$, we show this context by $c_j$. Second, among all nodes connected to $c_j$, $c_{i+1}$ be selected. By changing from $D_1$ to $D_2$, the former's probability reduces by $2\epsilon \Delta u$; the latter however maintains the same probability, as the selection among the connected nodes is random. Hence, as we also showed in proving Theorem~\ref{thm:Unismpl}, the probability of selecting the same $C^*_p$ from a set of $n$ samples such as $C_M$, is bounded by $e^{2\epsilon \Delta u}$. Therefore,   
\begin{equation} 
    \frac{Pr[Alg3(D_1) = C^*_p]}{Pr[Alg3(D_2) = C^*_p]} \leq e^{2\epsilon_1 \Delta u}
\label{eq: Smatch}
\end{equation}
Note that multiple paths could lead Algorithm~\ref{alg:Rptsmpl} to achieve the same output for $D_2$ as $D_1$. In our proof however, we consider the worst case where there is only one path, therefore Algorithm~\ref{alg:Rptsmpl} has to follow the same path for $D_2$ that it does for $D_1$. 
\end{proof}
\begin{theorem} The computation complexity of Algorithm~\ref{alg:Rptsmpl} is $\mathcal{O}(t)$.
\label{thm: CompRwalk}
\end{theorem}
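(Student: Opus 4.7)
The plan is to decompose the runtime into two phases and bound each separately. In the first phase (lines 3--11), the algorithm grows $C_M$ up to $n$ matching contexts by repeatedly extending a random walk on the context graph. In the second phase (line 13), the Exponential mechanism is applied to the $n$ collected samples.

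The key structural observation is that every vertex in the context graph has degree exactly $t = \sum_{i=1}^m |A_i|$, since flipping any single bit of the binary vector representation $\langle c_{11}, \cdots, c_{m|A_m|}\rangle$ yields a unique neighbor. Consequently, for any current vertex $C$, the set $C_{conn}$ of connected vertices contains at most $t$ elements. To acquire one new matching sample from the current head of the walk, the algorithm may in the worst case inspect every neighbor in $C_{conn}$ before either finding one that satisfies $f_M(\cdot, V) = \textit{true}$ or exhausting the neighborhood (which triggers the $C_{conn} \neq \emptyset$ loop exit). Each inspection is one call to $f_M$, which I treat as constant-time (independent of $t$), as is standard when $f_M$ is viewed as a black-box subroutine. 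Thus each extension of the walk costs $O(t)$.

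Next, I would bound the total number of successful extensions by $n$, so that the sampling phase contributes $O(nt)$ in the worst case. The final call to the Exponential mechanism enumerates over the at most $n$ elements of $C_M$, contributing an additional $O(n)$. Summing yields $O(nt + n)$. Following exactly the convention already adopted in the proof of Theorem~\ref{thm: CompUni}, where the target sample count $n$ is treated as a constant independent of $t$, this expression collapses to $O(t)$, which is the stated bound.

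The main obstacle is not a deep mathematical one but rather bookkeeping: one must be careful that the $O(t)$ per-extension bound does not conceal hidden factors. In particular, the argument silently relies on (i) $f_M$ being constant-time as noted above and (ii) $C_{conn}$ being enumerable in time $O(t)$ rather than requiring any expensive graph construction, which is immediate since neighbors of $C$ are just the $t$ single-bit flips of $C$. A secondary subtlety is that the random walk might terminate early if $C_{conn}$ becomes empty before $n$ samples are collected; this can only \emph{decrease} the total work, so it does not threaten the upper bound. With these two observations in place, the rest of the proof is a direct substitution of constants.
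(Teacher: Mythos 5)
Your proposal is correct and follows essentially the same argument as the paper: an $\mathcal{O}(t)$ cost per walk extension from searching the $t$ neighbors without replacement, repeated $n$ times to build $C_M$, plus $\mathcal{O}(n)$ for the final Exponential mechanism, giving $\mathcal{O}(nt+n)$, which reduces to $\mathcal{O}(t)$ since $n$ is treated as a constant. Your explicit notes on $f_M$ being a constant-time black box and on early termination only reducing work are sensible clarifications of points the paper leaves implicit.
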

\begin{proof}
Random walk starts from \textcolor{black}{a starting context for $V$}, $C_V$ and iteratively changes a random attribute value of $C_V$ until it finds a matching context among the $t$ connected contexts. Afterwards, it forms a path by repeating the same process for the last context on the chain. As searching through connected contexts to a context $C$ is done without replacement, its complexity is $\mathcal{O}(t)$, the process is repeated for a constant number of times $n$ to form $C_M$. Ultimately, the Exponential mechanism is applied to select the final candidate among $n$ (constant number of) samples. Hence, the algorithm's complexity is $\mathcal{O}(nt+n)$. 
\end{proof}  
Exploiting the context graph in the Random Walk algorithm improves the sampling complexity from exponential in Uniform Sampling to linear. Our experimental results in Section \ref{Experiments} also affirms the superiority of the Random Walk in utility and performance over the Uniform Sampling. To improve the utility and performance even more, we alter the walk over the connected contexts from a random method to a one directed by the utility function; in other words, we can improve sampling further by taking the utility already during selection of the next vertex into account. Therefore, we explore the Depth-First and Breadth-First search algorithms for sampling. 
\subsubsection{Depth-First Search Sampling}\label{DPDFS}
We modify the Depth-First search (DFS) algorithm \citep{Algorithms} to provide a differentially private version of the algorithm. We emphasize that differential privacy cannot be supported with a non-modified DFS algorithm, e.g. by perturbing its output. 
As described in Section 2.2, differential privacy requires the algorithm to generate any output with approximately the same ($e^{\epsilon}$ different) probability for neighboring datasets. However, as DFS is a deterministic algorithm, the probability of it generating any output is zero except for one particular value. Hence, applying DFS on a dataset $D_1$ might result in an output $r$ that has probability of zero when DFS is applied on a neighboring dataset $D_2$; this property prevents the original DFS from supporting differential privacy. To overcome the challenge, we modify DFS as presented in the Algorithm~\ref{alg:Depth}. The sampling method initiates a stack with  \textcolor{black}{a starting context for $V$, $C_V$}. Next, it searches over all connected contexts to $C_V$ and selects a matching context by applying the Exponential mechanism to the candidates and pushes the result onto the stack. The last context on the stack is the starting point for the next iteration. The iterations continue until the stack contains the desired number of samples, $n$.
\begin{algorithm}[ht]
    \SetKwInOut{Input}{Input}
    \SetKwInOut{Output}{Output}
    \Input{$D$ , $attr(R)$, $V$, $C_V$, $u$, $\epsilon$}
    \Output{$C_p$}
    $Stack \gets \{C_V\}$, $Samples \gets 0$, $Visited \gets \emptyset $\\
    \Comment{$n$ is the total number of samples}\\
    \While {$|Visited| \leq n$ $\&\&$ $|Stack|>0$}   
    {
    $C \gets Stack.top()$, $Visited \gets Visited \cup C$, $C_{chldn} \gets \emptyset$ \\
    \Comment{Generating all children of C}\\
    \For{$i \leq t$ }{{$C[i] \gets 
    \begin{cases}
       \text{0,} &\quad\text{If $C[i] = 1$}\\
       \text{1,} &\quad\text{If $C[i] = 0$}\\
    \end{cases}$}\\
    \If{$f_M(D_C,  V)= true$ and $C \notin Visited$}{$C_{chldn} \gets C_{chldn} \cup C$}}
    \If{not $C_{chldn}$}{$Stack.pop()$} \Else{
    \Comment{The Exponential mechanism on children  of $C$}\\
    $C \gets Exp^{\epsilon_1}_u(D, C_{chldn})$\\
    push $C$ onto $Stack$ \\}}
    \Comment{The Exp. mechanism for the final private answer}\\
    return $C_p \gets Exp^{\epsilon_1}_u(D, Visited)$\;
    \caption{PCOR - Depth-First Search Sampling}
    \label{alg:Depth}
    \textbf{Comment:} The algorithm starts from \textcolor{black}{a starting} context, $C_V$, adds it to the stack. It then calculates all the connected nodes to $C_V$ that are matching contexts for $V$. Next, the algorithm applies the Exponential mechanism to select one of the connected nodes, $c_c$. Afterwards, the $c_c$ is pushed onto the stack, to be the starting point for the next search as well. This procedure continues until $n$ samples are pushed onto the stack. The last sample in the stack is the final privacy preserving answer.
\end{algorithm} 
\begin{theorem}
The differentially private DFS described in Algorithm~\ref{alg:Depth} satisfies $(\epsilon=(2n+2)\epsilon_1$, $COE_M(. ,V))$--OCDP, where $n$ is the total number of samples, $\epsilon_1$ is the privacy parameter in the the Exponential mechanism.
\label{thm:Depth}
\end{theorem}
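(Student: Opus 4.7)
The plan is to extend the strategy used for Theorem~\ref{thm:RWalk}, but to account for the fact that Algorithm~\ref{alg:Depth} invokes the Exponential mechanism at every search step (to choose a child) \emph{in addition} to the single final Exponential call used to pick the released output. The overall privacy loss should then follow from sequential composition of these invocations, each of which is an instance of the Exponential mechanism with sensitivity $\Delta u \leq 1$.

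First I would fix two $f$-neighbors $(D_1, D_2) \in \mathcal{N}(COE_M(\cdot, V))$. By Definition~\ref{def:fneighbor}, $COE_M(D_1, V) = COE_M(D_2, V)$, so membership of a candidate in $C_{chldn}$ (which depends only on $f_M$) is identical under both datasets at every step. Together with the fact that the tree-search's random choices are drawn independently from Exponential mechanisms, this lets me decompose the probability of producing a fixed output $C^*_p$ as a product over (i) the Exponential-mechanism calls that extend the trajectory (lines~14--15) and (ii) the final Exponential-mechanism call over $Visited$ (last line).

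Next I would apply the per-call bound from Theorem~\ref{thm: ExpPriv} together with the $\Delta u \leq 1$ argument used in Theorems~\ref{thm: Pproof1} and~\ref{thm:RWalk}: each Exponential invocation contributes at most a factor of $e^{2\epsilon_1 \Delta u} \leq e^{2\epsilon_1}$ to the ratio $Pr[Alg4(D_1) = C^*_p] / Pr[Alg4(D_2) = C^*_p]$. The outer while loop terminates with $|Visited| = n$, and each iteration performs at most one Exponential call (either a pop, which is mechanism-free, or a push, which costs one call), so the search phase contributes at most $n$ invocations. Adding the final Exponential call over $Visited$ yields at most $n+1$ total invocations, giving
\[
\frac{Pr[Alg4(D_1) = C^*_p]}{Pr[Alg4(D_2) = C^*_p]} \;\leq\; \prod_{i=1}^{n+1} e^{2\epsilon_1 \Delta u} \;\leq\; e^{(2n+2)\epsilon_1}.
\]
As in the proof of Theorem~\ref{thm:RWalk}, I would take the worst case in which only one trajectory generates $C^*_p$; aggregating over multiple trajectories only tightens the ratio and therefore does not weaken the claim.

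The main obstacle I anticipate is carefully justifying that the candidate set $C_{chldn}$ at step $i$ is truly identical across $D_1$ and $D_2$, since $C_{chldn}$ depends on $Visited$ and on the current stack top, both of which are produced by earlier random choices. I would handle this by an induction on the step index: conditional on the same sequence of Exponential outcomes so far, the $Stack$ and $Visited$ states coincide under $D_1$ and $D_2$ (because $f$-neighbors agree on $f_M$), so the pool over which the next Exponential mechanism is run is the same set. This preserves the per-step multiplicative bound at every level of the search, which is exactly what the sequential-composition telescoping requires.
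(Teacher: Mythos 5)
Your proposal is correct and follows essentially the same route as the paper's proof: decompose the output probability into the $n$ Exponential-mechanism calls made while building the sample set plus the final Exponential call over the visited contexts, bound each call's contribution by $e^{2\epsilon_1 \Delta u}$ using the OCDP neighboring condition $COE_M(D_1,V)=COE_M(D_2,V)$, and take the worst case of a single generating trajectory to obtain $e^{(2n+2)\epsilon_1}$. Your explicit induction showing that the candidate pools $C_{chldn}$ coincide under $D_1$ and $D_2$ conditional on the same earlier choices is a welcome elaboration of a step the paper leaves implicit, but it does not change the argument.
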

\begin{proof}
A similar reasoning to the one in Theorem~\ref{thm:RWalk} holds for DFS. The idea is calculating the probability of forming the same stack of nodes for neighboring datasets $D_1$ and $D_2$. However, the probability of obtaining the same stack, i.e. the corresponding probability to $\Pi^{n}_{i=1} Pr_{D_1}[C_i|C_{i-1}] \times  Pr_{D_1}[C_V]$ in Equation~\ref{eq: Chn3}, is not the same value for $D_1$ and $D_2$ anymore. As in differential private DFS, we apply the Exponential mechanism before adding any child to the stack, each conditional probability in Equation~\ref{eq: Chn3} is $\epsilon_1 \Delta u$ different for $D_1$ and $D_2$. Since there are $n$ nodes in the stack, the upper bound is: $\frac{Pr_{D_1}[Stack]}{Pr_{D_2}[Stack]} \leq e^{2n \epsilon_1 \Delta u}$. We also know from Equation~\ref{eq: uniproof}, that the probability of selecting the same $C^*_p$ from a set of $n$ samples ($Stack$), is bounded by $e^{2\epsilon_1 \Delta u}$. This results in
    \begin{equation}
        \frac{Pr[Alg4(D_1) = C^*_p]}{Pr[Alg4(D_2) = C^*_p]} \leq e^{(2n+2) \epsilon_1 \Delta u}
    \label{eq: Depth}
    \end{equation}
\end{proof}
\begin{theorem} The computational complexity of Algorithm \ref{alg:Depth} is $\mathcal{O}(t)$, where $t$ is the total number of attribute values.
\label{thm: CompDFS}
\end{theorem}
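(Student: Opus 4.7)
The plan is to bound the work done by Algorithm~\ref{alg:Depth} by separately accounting for (i) the number of iterations of the main \textbf{while} loop, (ii) the cost of a single iteration, and (iii) the cost of the final Exponential mechanism call on the \emph{Visited} set.

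First, I would bound the number of iterations. In each pass through the loop, the stack either grows by one (a new context is pushed) or shrinks by one (the top is popped because it has no valid unvisited children). A push can happen only when the child produced by the Exponential mechanism is a context not yet in \emph{Visited}; thus every push strictly increases $|\emph{Visited}|$ by $1$. Because the loop terminates as soon as $|\emph{Visited}|$ exceeds the constant $n$, the number of pushes is at most $n$. Since the stack starts with a single element and must remain non-empty for the loop to continue, the number of pops is at most the number of pushes plus one. Consequently the total number of iterations is $O(n) = O(1)$.

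Next, I would bound the cost of one iteration. Generating the candidate children of the current context $C$ requires flipping each of the $t$ coordinates of the binary vector representation (lines~6--9) and, for each flip, invoking the oracle $f_M$ and a set-membership test against \emph{Visited}; treating the outlier check and the hash-based membership test as $O(1)$, this phase costs $O(t)$. Applying the Exponential mechanism to $C_{chldn}$ then requires computing one weight per candidate and sampling, which is $O(|C_{chldn}|) = O(t)$. All remaining stack operations are $O(1)$, so each iteration costs $O(t)$.

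Combining the two bounds gives $O(n \cdot t)$ total work inside the loop. Finally, the call $Exp^{\epsilon_1}_u(D,\emph{Visited})$ at the end has cost $O(|\emph{Visited}|) = O(n)$. Summing everything yields $O(nt + n) = O(t)$ since $n$ is a constant chosen by the data owner. The one subtle step that I would be most careful about is the iteration count: one must argue that pops cannot cause the loop to run for more than $O(n)$ steps, which is exactly the invariant that $|\emph{Visited}|$ is monotone and each push increases it by one, making the number of pushes (and hence pops, up to an additive constant) bounded by $n$.
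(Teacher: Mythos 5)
Your proposal is correct and follows essentially the same route as the paper's proof: each iteration costs $\mathcal{O}(t)$ (generating the $t$ neighbours plus one Exponential-mechanism draw over at most $t$ children), the loop runs a constant number of times because $n$ is a fixed constant, and the final selection over \emph{Visited} adds only $\mathcal{O}(n)$, giving $\mathcal{O}(nt+n)=\mathcal{O}(t)$ just as the paper's $\mathcal{O}(2nt)$ count does. The only difference is that you explicitly bound the pop/backtracking iterations via the invariant that each push adds a new element to \emph{Visited} (so iterations are at most $2n+\mathcal{O}(1)$), a detail the paper's proof glosses over by assuming every iteration pushes; this is a harmless strengthening, not a different argument.
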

\begin{proof}
The differential private DFS algorithm starts a path from \textcolor{black}{a starting context for $V$}, $C_V$, selects the next matching context the $t$ connected contexts by applying the Exponential mechanism to them and pushes the result onto the stack; which forms an $\mathcal{O}(2t)$ step. Then repeats the process for the last context on the stack until the stack size reaches the constant desired value, $n$, times. Hence, the algorithm's complexity is $\mathcal{O}(2nt)$. 
\end{proof} 
\subsubsection{Breadth-First Search Sampling}
We introduce a modified and differentially private version of the  Breadth-First search (BFS) algorithm~\citep{BFS} as our last sampling method. The reason for necessity of modifying the algorithm to support differential privacy is the same as the one provided for Depth-First search algorithm in Section \ref{DPDFS}. Our sampling method starts from \textcolor{black}{a starting context for $V$}, $C_V$. It initiates the set $C_M$ with that context, then searches over all connected contexts to $C_V$ and adds all the matching contexts to $C_M$. To decide the starting point for the next iteration, it applies the Exponential mechanism to the contexts in the $C_M$(treating $C_M$ as a priority queue) and proceeds similarly with the selected context. The process continues until $n$ number of samples are added to $C_M$.
\begin{algorithm}[ht]
    \SetKwInOut{Input}{Input}
    \SetKwInOut{Output}{Output}
    \Input{$D$ , $attr(R)$, $V$, $C_V$, $u$, $\epsilon$}
    \Output{$C_p$}
    $C_M = \{C_V\}$, $Samples \gets 0$, $Visited \gets \emptyset$\\
    \Comment{$n$ is the total number of samples}\\
    \While {$|Visited| \leq n$ $\&\&$ $|C_M|>0$}   
    {
    \Comment{The Exponential mechanism on $C_M$\\
    $C \gets Exp^{\epsilon_1}_u(D, C_M)$\\
    $Visited \gets Visited \cup C$, $C_M.remove(C)$\\
    \Comment{Adding all children of C to $C_M$}\\
    \For{$i \leq t$ }{{$C[i] \gets
    \begin{cases}
       \text{0,} &\quad\text{If $C[i] = 1$}\\
       \text{1,} &\quad\text{If $C[i] = 0$}\\
    \end{cases}$}\\
    \If{$f_M(D_C,  V) = true$ and $C \notin Visited$}{$C_M \gets C_M \cup C$}}}}
    \Comment{The Exp.  mechanism for the final private answer}\\
    return $C_p \gets Exp^{\epsilon_1}_u(D, Visited)$\;    
    \caption{PCOR - Breadth-First Search Sampling}
    \label{alg:Chnsmpl}
    \textbf{Comment:} The algorithm starts from \textcolor{black}{a starting} context $C_V$, and calculates all the connected contexts to $C_V$, adds the matching ones to the priority queue $C_M$. It continues by applying the the Exponential mechanism to the contexts in $C_M$, and iterates the procedure  until it finds all $n$ samples. 
\end{algorithm} 
\begin{theorem}
The differentially private BFS described in Algorithm~\ref{alg:Chnsmpl} satisfies $((2n+2)\epsilon_1, COE_M(. ,V))$--OCDP, where $n$ is the total number of samples and $\epsilon_1$ is the privacy parameter in the Exponential mechanism. 
\label{thm:Chn}
\end{theorem}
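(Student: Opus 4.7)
The plan is to mirror the template established for the DFS proof (Theorem~\ref{thm:Depth}), since Algorithm~\ref{alg:Chnsmpl} also alternates deterministic graph-expansion steps with applications of the Exponential mechanism. Fix a neighboring pair $(D_1, D_2) \in \mathcal{N}(COE_M(.,V))$, so by Definition~\ref{def:fneighbor} the two datasets differ by exactly one record and $COE_M(D_1,V) = COE_M(D_2,V)$. This last equality is what I would use to argue that the candidate pools $C_M$ encountered during the traversal range over the same set of possible contexts regardless of whether the input is $D_1$ or $D_2$; only the probabilities assigned by the Exponential mechanism differ between the two executions. Fixing a target output $C^*_p$, I would then decompose the probability that the algorithm returns $C^*_p$ according to the sequence of visited contexts $(c_1, c_2, \ldots, c_n)$ together with the final selection, analogously to Equation~\ref{eq: Chn3}.

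The key step is to bound the ratio of probabilities contributed by each of the $n$ iterations. At iteration $i$, the next visited context is chosen from the current priority queue $C_M^{(i)}$ by applying $Exp^{\epsilon_1}_u$. By Theorem~\ref{thm: ExpPriv} together with the sensitivity bound $\Delta u \le 1$ established in Section~\ref{Utility}, each such invocation contributes at most $e^{2\epsilon_1 \Delta u}$ to the ratio $\Pr_{D_1}[\cdot]/\Pr_{D_2}[\cdot]$. The deterministic child-generation phase (lines expanding $C$ into its connected neighbors and testing $f_M$) contributes no additional privacy loss, because the candidate set depends only on the attribute domains and on $COE_M$, which agree on $D_1$ and $D_2$ by assumption. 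Multiplying the $n$ iteration factors yields $\Pr_{D_1}[Visited]/\Pr_{D_2}[Visited] \le e^{2n\epsilon_1 \Delta u}$.

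Finally, the algorithm applies a concluding Exponential mechanism on $Visited$ to select $C^*_p$; this contributes one more factor of $e^{2\epsilon_1 \Delta u}$ by the same argument used in Equation~\ref{eq: uniproof}. Combining the trajectory bound with the final-selection bound gives
\begin{equation}
\frac{\Pr[\text{Alg5}(D_1) = C^*_p]}{\Pr[\text{Alg5}(D_2) = C^*_p]} \le e^{2n\epsilon_1 \Delta u} \cdot e^{2\epsilon_1 \Delta u} = e^{(2n+2)\epsilon_1 \Delta u} \le e^{(2n+2)\epsilon_1},
\end{equation}
which is the claimed $((2n+2)\epsilon_1, COE_M(.,V))$-OCDP guarantee.

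The main obstacle I foresee is justifying cleanly that the set of visited contexts (as a set of possible trajectories, not their probabilities) is identical under $D_1$ and $D_2$, so that the composition argument is a pure product of Exponential-mechanism ratios and no ``missing path'' issue arises. This is exactly where the OCDP neighboring condition $COE_M(D_1,V) = COE_M(D_2,V)$ does the heavy lifting: the membership test $f_M(D_C,V) = true$ in line 11 yields the same answer on both datasets for every $C$ reached during expansion, so the support of the visited-trajectory distribution is identical. As in the DFS proof, I would note in a closing remark that I bound the worst case in which only one trajectory achieves $C^*_p$; multiple trajectories only tighten the ratio.
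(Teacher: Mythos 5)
Your proposal is correct and follows essentially the same route as the paper: the paper proves this theorem by citing the DFS/random-walk argument, i.e.\ decomposing the output probability over the visited trajectory, charging each of the $n$ Exponential-mechanism selections a factor $e^{2\epsilon_1 \Delta u}$ and the final selection another $e^{2\epsilon_1 \Delta u}$, yielding $e^{(2n+2)\epsilon_1 \Delta u}$ with $\Delta u \le 1$. Your explicit justification that the candidate pools have identical support under $D_1$ and $D_2$ via $COE_M(D_1,V)=COE_M(D_2,V)$ is a point the paper leaves implicit, but it is the same argument.
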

\begin{proof}
Similar to DFS, the privacy proof of BFS follows the proof for Theorem~\ref{thm:RWalk}. Which results in
\begin{equation}
    \frac{Pr[Alg5(D_1) = C^*_p]}{Pr[Alg5(D_2) = C^*_p]} \leq e^{(2n+2) \epsilon_1 \Delta u}
\label{eq: Smatch2}
\end{equation}
\end{proof}
\begin{theorem} The computation complexity of Algorithm~\ref{alg:Chnsmpl} is $\mathcal{O}(t)$, where $t$ is the total number of attribute values.
\label{thm: CompBFS}
\end{theorem}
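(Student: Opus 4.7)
The plan is to mirror the structure of the complexity proof for DFS (Theorem~\ref{thm: CompDFS}), but with extra care because BFS maintains a priority queue $C_M$ that persists and grows across iterations, rather than a freshly computed children list. So the first step is to itemize the per-iteration cost of the main while loop (lines 3--11 of Algorithm~\ref{alg:Chnsmpl}): (a) one invocation of the Exponential mechanism on the current priority queue $C_M$, which costs $\mathcal{O}(|C_M|)$ since it must normalize over all candidates; (b) the for-loop at lines 7--10 that generates up to $t$ neighbours of $C$ by flipping each of the $t$ bits and testing membership in $Visited$ and validity under $f_M$, costing $\mathcal{O}(t)$; and (c) constant-time bookkeeping for inserting into $C_M$ and $Visited$.

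Next I would bound the size of $C_M$. Since each iteration removes one element from $C_M$ and inserts at most $t$ new children, after the $i$-th iteration we have $|C_M| \le 1 + i(t-1) = \mathcal{O}(it)$. Summing the per-iteration cost over the $n$ iterations of the while loop therefore gives
\begin{equation}
\sum_{i=1}^{n}\bigl(\mathcal{O}(|C_M^{(i)}|) + \mathcal{O}(t)\bigr) \;=\; \mathcal{O}\!\Bigl(\sum_{i=1}^{n} it + nt\Bigr) \;=\; \mathcal{O}(n^2 t).
\end{equation}
I would then add the final call to the Exponential mechanism on $Visited$ at line 12, which is $\mathcal{O}(n)$ since $|Visited| \le n$. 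The grand total is $\mathcal{O}(n^2 t + n)$.

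Finally, following the convention used throughout this section (see the proofs of Theorems~\ref{thm: CompUni}, \ref{thm: CompRwalk}, and~\ref{thm: CompDFS}, where $n$, the desired number of samples, is treated as a constant parameter of the sampling scheme rather than as scaling with the input), I would absorb $n$ into the constant and conclude that the overall complexity is $\mathcal{O}(t)$, matching the claim of the theorem.

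The main obstacle I anticipate is justifying the treatment of $n$ as a constant while simultaneously arguing that the growth of $C_M$ is benign. Unlike DFS, where $C_{chldn}$ is discarded at the end of each iteration so that each Exponential mechanism call is over at most $t$ items, BFS accumulates candidates globally; a naive bound could easily yield $\mathcal{O}(nt^2)$ or worse if one were not careful. The tightening therefore comes from the observation that only one element is popped per iteration, so the queue size grows linearly in $i$ (not exponentially), and from explicitly invoking the constant-$n$ convention inherited from the earlier proofs.
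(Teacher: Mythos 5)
Your proposal is correct and follows essentially the same route as the paper: an $\mathcal{O}(t)$ neighbour-generation step plus an Exponential-mechanism call over the growing queue $C_M$ per iteration, yielding $\mathcal{O}(n^2 t + nt)$ overall, with the stated $\mathcal{O}(t)$ bound obtained by treating the sample count $n$ as a constant. Your per-iteration bound $|C_M^{(i)}| = \mathcal{O}(it)$ is just a slightly more explicit version of the paper's blanket $\mathcal{O}(nt)$ bound on the queue, so the arguments coincide.
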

\begin{proof}
BFS initiates the search from \textcolor{black}{a starting context for $V$}, $C_V$, searches over its connected contexts to find the matching ones to add to $C_M$ which forms an $\mathcal{O}(t)$ step. This step is repeated for a context chosen by the Exponential mechanism from $C_M$ ($\mathcal{O}(nt)$) for the constant desired number of samples, $n$, times. Hence, the algorithm is $\mathcal{O}(n^2t+nt)$.
\end{proof} 
\section{Experiments}\label{Experiments}
We run five groups of experiments to support our choices for the parameters in our PCOR design varying: i) Sampling algorithm, ii) Utility function, iii) Outlier detection algorithms, iv) Privacy parameter $\epsilon$, and v) Group privacy limit in OCDP. 
\subsection{Datasets}
We evaluate PCOR over two datasets. The first one is the Ontario's public sector salary dataset. The province of Ontario (Canada) annually publishes a list of public employees who earn \$100,000 and above~\citep{Ontario}. We filter the dataset to obtain the information of 51000 employees with income higher than 100K in Ontario, the attributes are $attr(R) = $ $\{Job title, Employer, Year, Salary\}$, with domain sizes: $\lvert Job title \rvert= 9$, $\lvert Employer \rvert= 8$, and $\lvert Year \rvert= 8$. The second dataset we use is homicide reports in the United States \citep{Murder}, compiled and made available by the Murder Accountability Project. The data is gathered from multiple agencies and includes the age, sex, ethnicity of victims and perpetrators, their relationship, and the weapon used. We filter the dataset to obtain the information of 110,000 records. The attributes of the data are $attr(R) = $ $\{AgencyType, State, Weapon, VictimAge\}$. The categorical attributes have domains of size 4, 6, and 6 respectively. Our results for both datasets follow the same pattern. Due to space limit, we mainly provide the experiment results on the salary dataset, and refer to our full paper for the detailed experiment results on the homicide dataset.
\subsection{Evaluation Setup} \label{Setup}
We ran our experiments on a machine with 1 TB RAM, 132 cores, Intel(R) Xeon(R) CPU E7-8870 \@ 2.40GHz specifications. We repeated each experiment 200 times to achieve reliable results for PCORs and fair comparison of the algorithms in utility and performance. The performance is measured as the runtime of the algorithm, in two metrics: i) The range of shortest to longest runtime, and ii) The average runtime. The utility is measured as the proportion of the utility of the PCOR's output to the maximum utility. The maximum utility for each data point (outlier), is gained from a reference file. The reference file contains all possible contexts in $attr(R)$ accompanied with their associated utility, and the list of outliers for each context. Generating this file, for our data set of 51,000 records utilizing an optimized program took three days to complete on the machine mentioned earlier. This is also the running time of the direct/formulaic application of Exponential mechanism in OCDP (described in Section \ref{Direct}). From the reference file, we find all matching contexts for outlier $V$ and find the one with maximum utility to divide the utility results of the PCOR by. For the 200 utility results, we provide the mean and the 90\% confidence interval (CI) for each PCOR. We run five groups of experiments to answer the following questions: 1) what is the best (private, accurate, fast) sampling algorithm for PCOR? 2) Is PCOR flexible with embedding other utility functions? 3) Is PCOR flexible with embedding other outlier detection algorithms? \textcolor{black}{We evaluate PCOR over three deterministic outlier detection algorithms, each from a main category described in Section 2.1, namely: Grubbs~\citep{Grubbs}, Local Outlier Factor (LOF) \citep{LOF}, and Histogram~\citep{Hist}.} 4) How does changing each one of privacy, utility and performance parameters affect the other two. 5) How does changing a (group of) record affect the set of possible outcomes, $COE_M$? \\
\subsection{Choosing a Sampling algorithm for PCOR}
We show that using Breadth-First Search algorithm for private sampling results in the highest utility and performance for PCOR. We evaluate four sampling algorithms: i) Uniform Sampling, ii) Random Walk, iii) Breadth-First Search, and iv) Depth-First Search
\footnote{The results are depicted in Figure 1 in the appendix}
. We generate sets of length 50 samples and set $\epsilon = 0.2$ as the total differential privacy budget. This translates to $\epsilon_1 \approx 0.002$ in Depth-First Search and Breadth-First Search as shown in Equations \ref{eq: Depth} and \ref{eq: Smatch2}, and $\epsilon_1 = 0.1$ in Uniform Sampling and Random Walk as shown in Equations \ref{eq: uniproof} and \ref{eq: Smatch} respectively. The run times of the sampling algorithms
\footnote{Figure 1[a-d] in the appendix}
are summarized in Table~\ref{tab:SampComp1}. 
\begin{table}[h!]
  \centering
  \caption{Sampling Methods Comparison - Performance}
  \label{tab:SampComp1}
  \begin{tabular}{cccccc}
    \toprule
    Algorithm & $T_{min}$ & $T_{max}$ & $T_{avg}$ & $\epsilon$ & Outlier\\
    \midrule
    {Uniform}& 7m & 24h & 97m & 0.2 & LOF\\
    {Random Walk} & 15s & 109s & 51s & 0.2 & LOF\\
    {DFS} & 8m & 80m & 40m & 0.2 & LOF\\
    {BFS} & 6m & 61m & 37m & 0.2 & LOF\\
    \bottomrule
  \end{tabular}
\end{table}
Utility is measured as the ratio of the population size of the private answer to that of the maximum context's for an outlier $V$. Utility comparison results are summarized in Table \ref{tab:SampComp2}.
\begin{table}[h!]
  \centering
  \caption{Sampling Methods Comparison - Utility}
  \label{tab:SampComp2}
  \begin{tabular}{ccccc}
    \toprule
    Algorithm & Utility & CI & $\epsilon$ & Outlier\\
    \midrule
    {Uniform}& 0.65 & (0.64, 0.67) & 0.2 & LOF\\
    {Random Walk} & 0.57 & (0.55, 0.60) & 0.2 & LOF\\
    {DFS} & 0.88 & (0.85, 0.90) & 0.2 & LOF\\
    {BFS} & 0.90 & (0.88, 0.93) & 0.2 & LOF\\
    \bottomrule
  \end{tabular}
\end{table}
The Random Walk improves the performance of the uniform sampling significantly, but it sacrifices utility. The BFS and DFS algorithms perform close to each other in running time and utility, they make up for the utility loss in Random Walk by a considerable amount, but require a longer runtime. Their advantage over Uniform sampling in both utility and performance is noticeable. The slight advantage of BFS over DFS in utility and runtime, also holds for PCOR with a different utility function (Section 6.3). Therefore, we select BFS as PCOR's final sampling method.   
\subsection{PCOR and Other Utility Functions}\label{Overlap}
We show the adaptability of PCOR with any utility function by providing the results of embedding another utility function than the population size in Tables \ref{tab:Over1} and \ref{tab:Over2}. 
\begin{table}[h!]
  \centering
  \caption{Intersection Overlap Utility - Performance}
  \label{tab:Over1}
  \begin{tabular}{cccccc}
    \toprule
    Algorithm & $T_{min}$ & $T_{max}$ & $T_{avg}$ & $\epsilon$ & Outlier\\
    \midrule
    {DFS} & 3m & 47m & 19m & 0.2 & LOF\\
    {BFS} & 5m & 48m & 20m & 0.2 & LOF\\
    \bottomrule
  \end{tabular}
\end{table}
The utility function in this experiment calculates the population intersection of the private context and the starting context, $C_V$. 
\begin{table}[h!]
  \centering
  \caption{Intersection Overlap Utility - Utility}
  \label{tab:Over2}
  \begin{tabular}{ccccc}
    \toprule
    Algorithm & Utility & CI & $\epsilon$ & Outlier\\
    \midrule
    {DFS} & 0.88 & (0.86, 0.91) & 0.2 & LOF\\
    {BFS} & 0.97 & (0.95, 0.98) & 0.2 & LOF\\
    \bottomrule
  \end{tabular}
\end{table}
The utility and performance evaluations
\footnote{Figure 2 in the appendix}
confirm that BFS is a superior candidate over the DFS algorithm. 
\subsection{PCOR and Outlier Detection Algorithms}
We show that PCOR can be successfully used with various outlier detection algorithm. We explore two more outlier detection algorithms in our PCOR design: Grubbs, and Histogram  methods. The results are
\footnote{Figure 3 in the appendix}
summarized in Table~\ref{tab:Outlier1} and Table~\ref{tab:Outlier2}. We filter the original dataset to obtain a subset of 11000 records with 14 attributes values in total, and use BFS as the sampling algorithms in the PCOR design. In this set of experiments, the number of samples is $n=50$ and the privacy parameter is $\epsilon=0.2$. 
\begin{table}[h!]
  \centering
  \caption{Outlier Detection Algorithms - Performance}
  \label{tab:Outlier1}
  \begin{tabular}{cccccc}
    \toprule
    Algorithm & $T_{min}$ & $T_{max}$ & $T_{avg}$ & $\epsilon$ & Sampling\\
    \midrule
    {Grubbs} & 0.5m & 1m & 0.8m & 0.2 & BFS\\
    {Histogram} & 2m & 4m & 3.4m & 0.2 & BFS\\
    \bottomrule
  \end{tabular}
\end{table}
For Histogram method, we bin the samples corresponding to a context $C$ to $\sqrt{|D_C|}$ bins, where $|D_C|$ is the size of the population covered by $C$. The bins with less frequency than $2.5 \times 10^{-3} |D_C|$ are labeled as outliers.
\begin{table}[h!]
  \centering
  \caption{Outlier Detection Algorithms - Utility}
  \label{tab:Outlier2}
  \begin{tabular}{ccccc}
    \toprule
    Algorithm & Utility & CI & $\epsilon$ & Sampling\\
    \midrule
    {Grubbs} & 0.86 & (0.84, 0.89) & 0.2 & BFS\\
    {Histogram} & 0.89 & (0.87, 0.91) & 0.2 & BFS\\
    \bottomrule
  \end{tabular}
\end{table}
The results confirm the compatibility of our PCOR design with various outlier detection algorithms. Furthermore, the success of applying BFS on the algorithms implies that the locality (discussed in Section 5.2) exists in all evaluated outlier detection algorithms. \\
\subsection{Privacy, Utility, Performance Trade-off}
We show the trade-off between privacy, utility and performance for our final candidate, BFS sampling algorithm, by changing the privacy parameter $\epsilon$ for $n = 50$ number of samples. The results are
\footnote{Figure 4[a-d] in the appendix}
summarized in \textcolor{black}{Table~\ref{tab:Trade1} and Table~\ref{tab:Trade2}.} 
\begin{table}[h!]
  \centering
  \caption{Effect of privacy parameter on performance}
  \label{tab:Trade1}
  \begin{tabular}{cccccc}
    \toprule
    $\epsilon$ & $T_{min}$ & $T_{max}$ & $T_{avg}$ & Sampling & Outlier\\
    \midrule
    {0.05} & 2m & 29m & 15m & BFS & LOF\\
    {0.1} & 2m & 29m & 16m & BFS & LOF\\
    {0.2} & 3m & 30m & 17m & BFS & LOF\\
    {0.4} & 3m & 30m & 17m & BFS & LOF\\
    \bottomrule
  \end{tabular}
\end{table}
\begin{table}[h!]
  \centering
  \caption{Effect of privacy parameter on utility}
  \label{tab:Trade2}
  \begin{tabular}{ccccc}
    \toprule
    $\epsilon$ & Utility & CI & Sampling & Outlier\\
    \midrule
    {0.05} & 0.67 & (0.62, 0.71) & BFS & LOF\\
    {0.1} & 0.82 & (0.78, 0.85) & BFS & LOF\\
    {0.2} & 0.90 & (0.88, 0.93) & BFS & LOF\\
    {0.4} & 0.92 & (0.90, 0.94) & BFS & LOF\\
    \bottomrule
  \end{tabular}
\end{table}
Increasing the $\epsilon$ from 0.05 to 0.4 generally increases the utility, however the parameter $\epsilon=0.2$ is the optimum value and the utility does not increase significantly afterwards. Moreover, changing the privacy parameter does not impose a notable effect on the algorithm runtime. 
\begin{table}[h!]
  \centering
  \caption{Effect of \# of samples on performance}
  \label{tab:Trade11}
  \color{black}\begin{tabular}{cccccc}
    \toprule
    \# Samples & $T_{min}$ & $T_{max}$ & $T_{avg}$ & Sampling & Outlier\\
    \midrule
    25 & 1m & 14m & 7m & BFS & LOF\\
    50 & 3m & 29m & 16m & BFS & LOF\\
    100 & 6m & 61m & 37m & BFS & LOF\\
    200 & 21m & 163m & 99m & BFS & LOF\\
    \bottomrule
  \end{tabular}
\end{table}
\begin{table}[h!]
  \centering
  \caption{Effect of \# of samples on utility}
  \label{tab:Trade22}
  \color{black}\begin{tabular}{ccccc}
    \toprule
    \# Samples & Utility & CI & Sampling & Outlier\\
    \midrule
    {25} & 0.85 & (0.81, 0.88) & BFS & LOF\\
    {50} & 0.88 & (0.85, 0.91) & BFS & LOF\\
    {100} & 0.90 & (0.88, 0.93) & BFS & LOF\\
    {200} & 0.84 & (0.81, 0.87) & BFS & LOF\\
    \bottomrule
  \end{tabular}
\end{table}
We also investigate the effect of changing the number of samples while the privacy parameter is fixed $\epsilon = 0.2$. \textcolor{black}{The results are summarized in Tables \ref{tab:Trade11} and \ref{tab:Trade22}\footnote{Figure 5[a-h] in the appendix}. As we showed in the proof of Theorem \ref{thm: CompBFS}, the only parameters in the computation complexity of BFS are: i) the total number of attribute values $t$, and ii) the number of samples $n$ -- which we referred to as a constant number in the proof. Hence, when $t$ is fixed, $n$ determines the performance of PCOR.} Increasing the number of samples from 25 to 100 decreases the performance, but increases the utility. This increment does not continue for for $n=200$. Recall from Theorem~\ref{thm:Chn}, that keeping a fixed $\epsilon$ in PCOR while increasing $n$, requires using smaller $\epsilon_1$'s in algorithm design, this can cancel out the positive affect of higher $n$'s on utility. \\

\subsection{Context Match and Group Privacy} \label{GroupPriv}
Our reduced salary dataset consists of 11,000 records with 3 attributes that include 14 attribute values in total. The homicide dataset consists of 28,000 records with three attributes that include 12 attribute values in total. We chose small datasets to run several experiments in a reasonable amount of time. We are aware that our results do not benefit from this choice, as changing a single record in a small dataset, more strongly affects the set of outliers than in a large dataset. We repeated each experiment for 100 random outliers and measured the contexts set match of the original dataset and its neighboring datasets. \textcolor{black}{Recall from Section \ref{OCDPinPCOR} that the experiments in this section are to observe: i) to what extent the assumption $COE_M(D_1, V) = COE_M(D_2, V)$ in OCDP matches the results of outlier detection algorithms in practice, ii) what are the effects if this assumption does not hold and whether it results in a privacy sacrifice. For both mentioned datasets and all three outlier detection algorithm, we recorded the $COE_M$ results of the original dataset and its 50 randomly chosen neighboring datasets. The results for the objective i) are summarized in Tables \ref{tab:COEMatchSal} - \ref{tab:COEMatchMur}. These tables} show the results for neighboring datasets that are different from the original dataset ($\Delta D$) in 1, 5, 10, and 25 records, for the salary and the homicide dataset respectively.
\begin{table}[h!]
  \centering
  \caption{\emph{COE} Match - Salary dataset}
  \label{tab:COEMatchSal}
  \begin{tabular}{ccccc}
    \toprule
    Algorithm & $\Delta D = 1$ & $\Delta D = 5$  & $\Delta D = 10$ & $\Delta D = 25$\\
    \midrule
    {Grubbs} & $99.8\%$ & $96.9\%$ & $94.5\%$ & $91.9\%$ \\
    {LOF} & $89\%$ & $87.9\%$ & $86.7\%$ & $85.7\%$ \\
    {Histogram} & $95.5\%$ & $82.1\%$ & $70.8\%$ & $58.8\%$\\
    \bottomrule
  \end{tabular}
\end{table} 
\begin{table}[h!]
  \centering
  \caption{\emph{COE} Match - Homicide dataset}
  \label{tab:COEMatchMur}
  \begin{tabular}{ccccc}
    \toprule
    Algorithm & $\Delta D = 1$ & $\Delta D = 5$ & $\Delta D = 10$ & $\Delta D = 25$  \\
    \midrule
    {Grubbs} & $100 \%$ & $100\%$  & $100\%$ & $97.8\%$\\
    {LOF} & $99.9\%$ & $99.5\%$ & $98.7\%$ & $97.7\%$ \\
    {Histogram} & $98.5\%$ & $85.2\%$ & $69.3\%$ & $53.3\%$ \\
    \bottomrule
  \end{tabular}
\end{table} 
\textcolor{black}{For objective ii) and to evaluate the privacy in the case of non-matching $COE$'s of the neighboring datasets,} we ran another set of experiments. Consider the contexts in the intersection of non-equal sets $COE_M(D_1 ,V)$ and $COE_M(D_2 ,V)$, where $|D_1 - D_2| = 1$. We measured the maximum ratio of the probability of selecting any of these context for $D_1$ to the probability of selecting the same context for $D_2$. This experiment was repeated for 200 outlier samples for three outlier detection algorithm over the Salary dataset. All the experiments confirm this ratio is below $e^{\epsilon}$ (recall the differential privacy requirement in Equation \ref{eq:DP}) for $\epsilon = 0.2$ that is the privacy budget in our experiments that we presented earlier in this section. In other words, for non-matching $COE$'s of neighboring datasets, we found no instance that violates the privacy bound in (unconstrained) $\epsilon-$differential privacy.
\section{Related Work}
The attempts to provide privacy in outlier detection, initiated with using secure computation methods for distributed \citep{Vaidya,Dung} or non-distributed \citep{Alabdul} data. In a different approach, Bhaduri et al. \citep{Nasa} use nonlinear data distortion transformation  and  show how  it  can be useful  for  privacy  preserving  anomaly  detection  from  sensitive datasets. Gehrke et al. \citep{Crowdblend} introduce \textit{crowd blending privacy}, in which for every individual in the database, either the individual blends in to a crowd of $k$ people in the database with respect to the privacy mechanism, or the mechanism ignores the individual's data. Inspired by crowd blending \citep{Crowdblend} and providing different levels of privacy \citep{auction}, Lui and Pass \citep{Outlierprivacy} proposed \textit{tailored differential privacy} for protecting outliers, in which the privacy parameter for an individual is determined by the ``outlierness" of the individual's data in the dataset. B{\"o}hler et al. \citep{Bohler} evaluate the opposite by relaxing the differential privacy definition and granting outliers \textit{less} protection, as they consider the outliers as ``faulty systems or sensors one need to detect". In another attempt in private outlier investigation, Nissim et al. \citep{ClusterPrivacy} use differential to locate a small outlier cluster privately. Okada et al.~in \citep{Okada} break the outlier analysis to two tasks: i) counting outliers in a given subspace and ii) discovering sub-spaces containing many outliers, under the constraints of differential privacy. They show their method achieves better utility compared to the original global sensitivity based methods. Nonetheless, our work is the first to investigate the privacy of the individuals in the outlier's context and not the outlier's privacy.
\section{Conclusion}
The revealed context in contextual outlier release leaks information about the individual records in the dataset. We address this privacy violation and propose techniques for a relaxed notion of differential privacy to provide PCOR and resolve the issue. However, the differential privacy solution in its formulaic application suffers from weak performance, impeding its usage in practice. To achieve efficiency, we propose utilizing a sampling layer in the design. We present differentially private graph search algorithms, first time in the literature, and use them for sampling. We prove PCOR with these sampling methods supports worthwhile levels of differential privacy, while providing the desired utility and performance. We articulate and demonstrate empirically that PCOR is compatible with any utility function in outlier detection. Our results indicate that the relaxation required for privacy in PCOR is a not a strong requirement, and is satisfied in most cases in practice. We also show that PCOR is generic and fits any outlier detection algorithm.  
\section{Acknowledgement}
We gratefully acknowledge the support of NSERC for grants RGPIN-05849, CRDPJ-531191, IRC-537591 and the Royal Bank of Canada for funding this research. This work benefited from the use of the CrySP RIPPLE Facility at the University of Waterloo.

\bibliographystyle{ACM-Reference-Format}
\bibliography{sample-base}

\section{Appendix} 
We present our experiment results in two types of figures: i) histogram of accuracy distribution (Utility), which ranges from 0 to 1, where 1 is the accuracy of the direct approach introduced in Section \ref{Direct} and ii) histogram of runtime distribution (Performance). The range for runtime varies according to the represented experiment.
\begin{figure*}
\subfloat[USample-Utility]{\includegraphics[width=.25\textwidth]{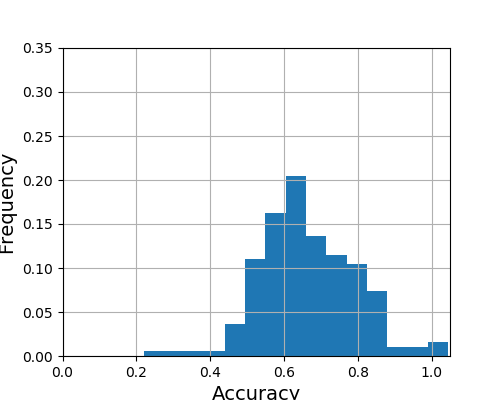}}
\subfloat[Rwalk-Utility]{\includegraphics[width=.25\textwidth]{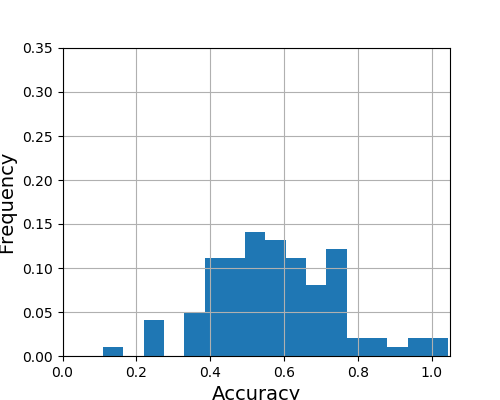}} 
\subfloat[DFS-Utility]{\includegraphics[width=.25\textwidth]{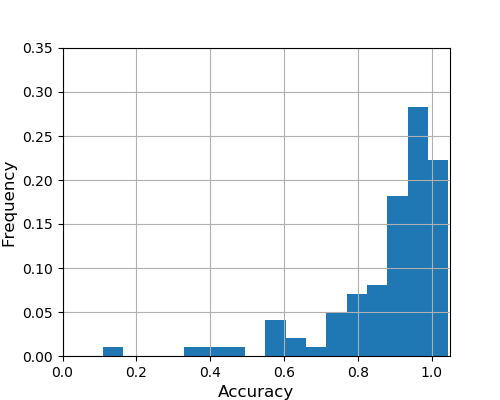}}
\subfloat[BFS-Utility]{\includegraphics[width=.25\textwidth]{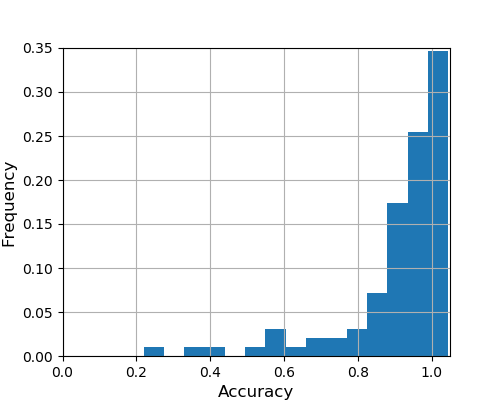}} \hfill
\subfloat[USample-Time]{\includegraphics[width=.25\textwidth]{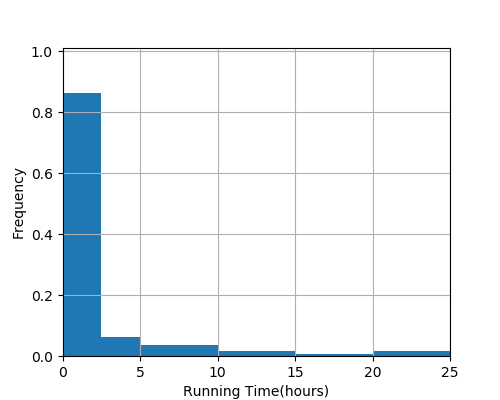}}
\subfloat[Rwalk-Time]{\includegraphics[width=.25\textwidth]{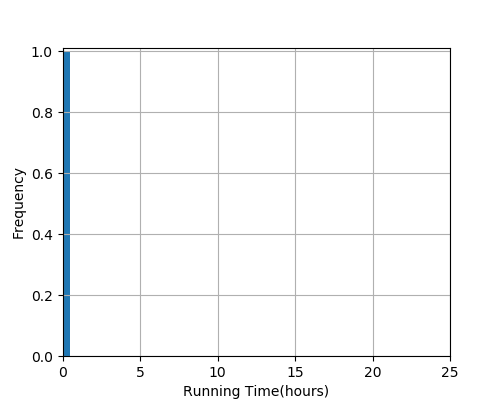}}
\subfloat[DFS-Time]{\includegraphics[width=.25\textwidth]{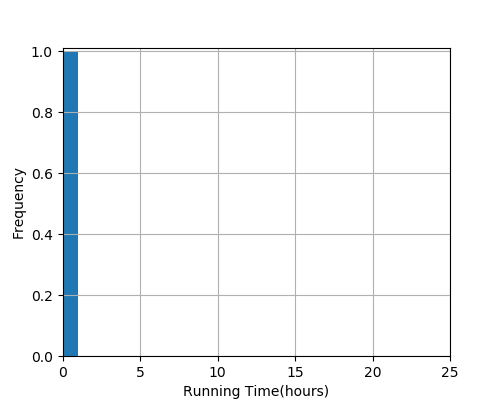}} 
\subfloat[BFS-Time]{\includegraphics[width=.25\textwidth]{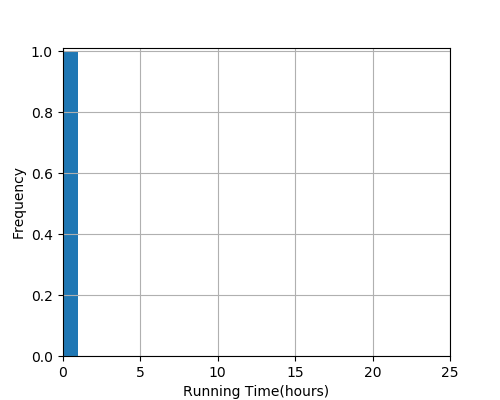}}\hfill
\caption{Utility and Performance of PCORs for different sampling candidates, utility function outputs context population size, outlier detection algorithm is LOF, and $\epsilon = 0.2$ (a,e) Uniform Sampling, (b,f) Random walk, (c,g) DFS, (d,h) BFS; where (a), (b), (c) , (d) represent
utility and (e), (f), (g) ,(h) demonstrate performance in running time}
\label{fig:MaxCtx_ACC_Time} 
\end{figure*}
\begin{figure*}
\centering
\subfloat[DFS-Utility]{\includegraphics[width=.25\textwidth]{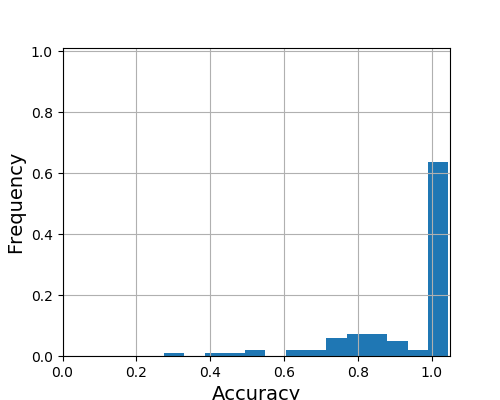}}
\subfloat[BFS-Utility]{\includegraphics[width=.25\textwidth]{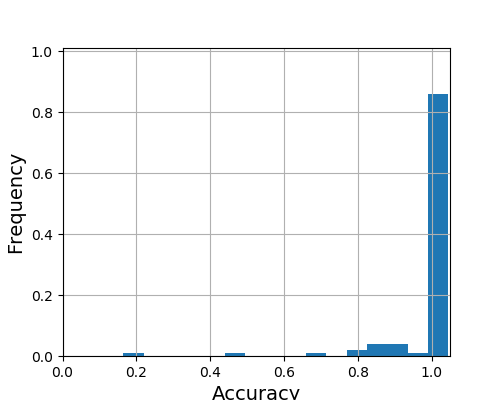}} 
\subfloat[DFS-Time]{\includegraphics[width=.25\textwidth]{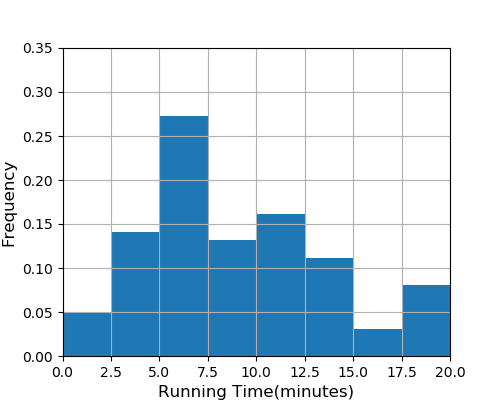}}
\subfloat[BFS-Time]{\includegraphics[width=.25\textwidth]{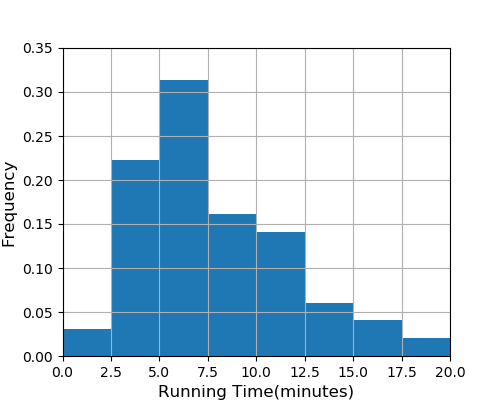}} 
\caption{Utility and Performance PCORS with different sampling candidates, utility is measured by the overlap with $C_V$, LOF is the outlier detection, and $\epsilon =0.1$ (a,c) DFS, (b,d) BFS; where (a), (b) represent utility and (c), (d) demonstrate performance in running time}
\label{fig:Overlap_ACC_Time} 
\end{figure*}

\begin{figure*}
\centering
\subfloat[Grubbs-Utility]{\includegraphics[width=.25\textwidth]{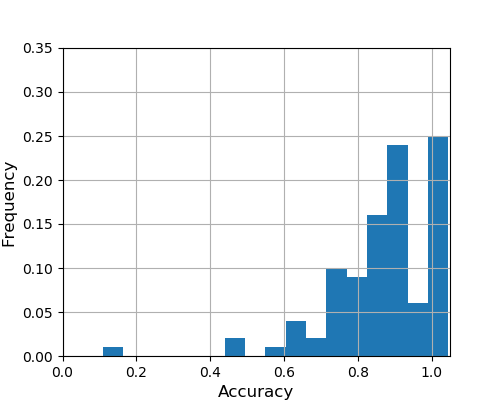}}
\subfloat[Histogram-Utility]{\includegraphics[width=.25\textwidth]{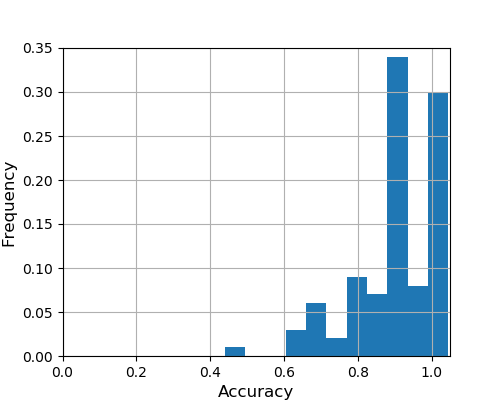}} 
\subfloat[Grubbs-Time]{\includegraphics[width=.25\textwidth]{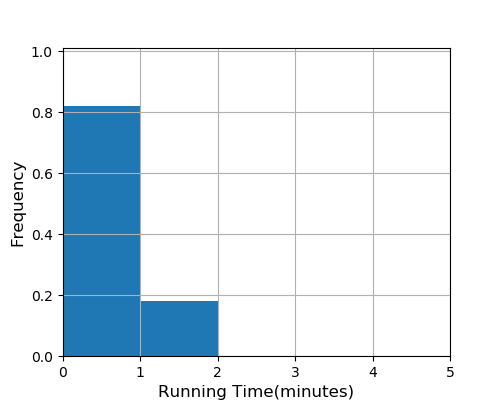}}
\subfloat[Histogram-Time]{\includegraphics[width=.25\textwidth]{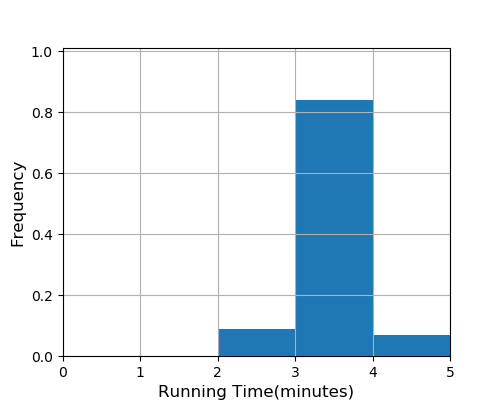}}
\caption{Utility and Performance of PCORs with Grubbs and Histogram outlier detection algorithms using BFS sampling over utility of context population size for $\epsilon = 0.1$ (a,c) Grubbs, (b,d) Histogram; where (a), (b) represent utility and (c), (d) demonstrate performance in running time}
\label{fig:Outlier} 
\end{figure*}

\begin{figure*}
\subfloat[$\epsilon = 0.05$, Utility]{\includegraphics[width=.25\textwidth]{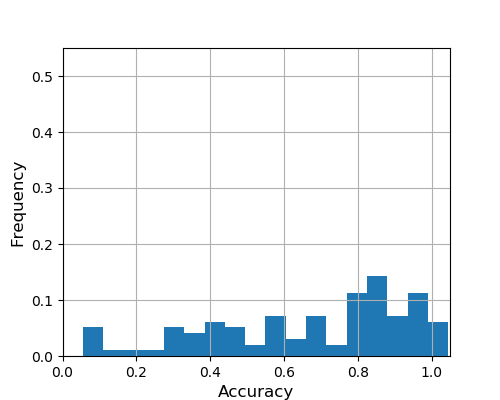}}
\subfloat[$\epsilon = 0.1$, Utility]{\includegraphics[width=.25\textwidth]{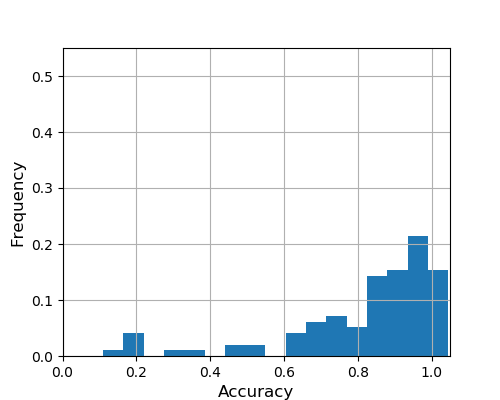}} 
\subfloat[$\epsilon = 0.2$, Time]{\includegraphics[width=.25\textwidth]{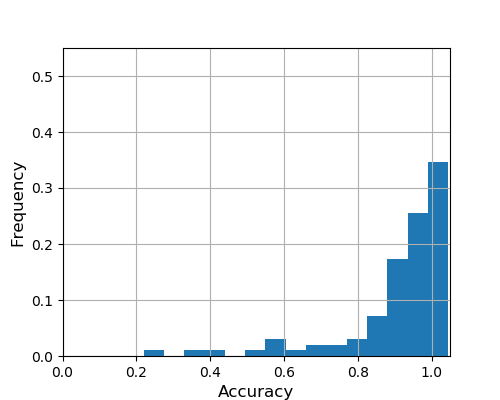}}
\subfloat[$\epsilon = 0.4$, Time]{\includegraphics[width=.25\textwidth]{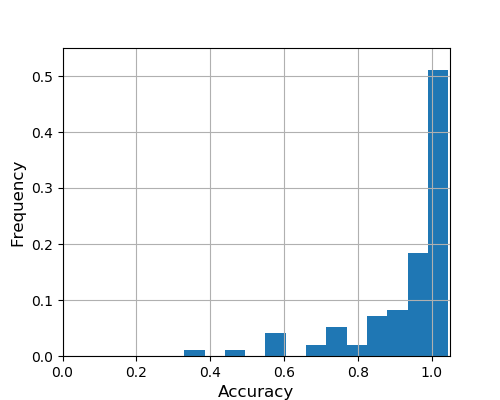}} \hfill
\caption{The effect of privacy parameter over Utility in PCOR with BFS sampling and LOF outlier detection(a) $\epsilon = 0.05$, (b) $\epsilon = 0.1$, (c) $\epsilon = 0.2$, (d) $\epsilon = 0.4$}
\label{fig:Epsilons} 
\end{figure*}

\begin{figure*}
\subfloat[$n =25,$, Utility]{\includegraphics[width=.25\textwidth]{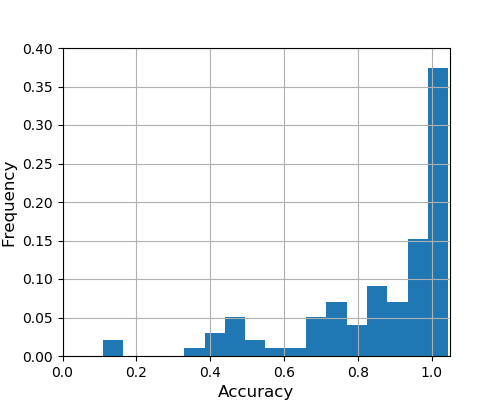}}
\subfloat[$n =50$, Utility]{\includegraphics[width=.25\textwidth]{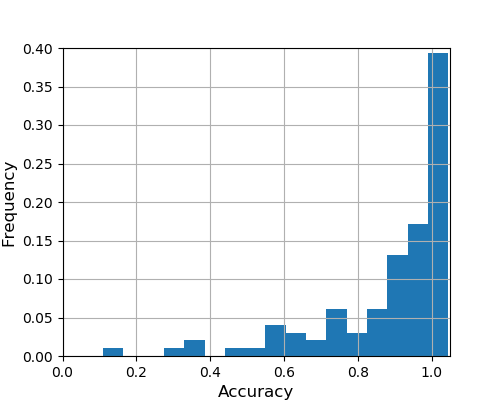}} 
\subfloat[$n =100$, Utility]{\includegraphics[width=.25\textwidth]{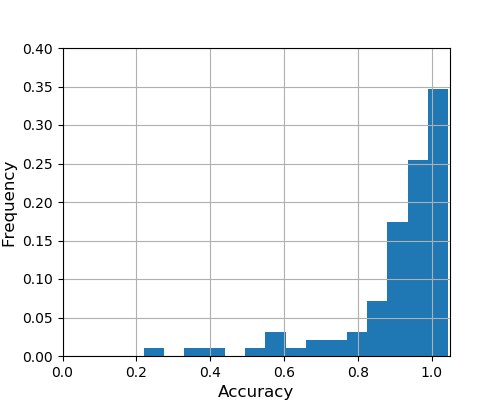}}
\subfloat[$n =200$, Utility]{\includegraphics[width=.25\textwidth]{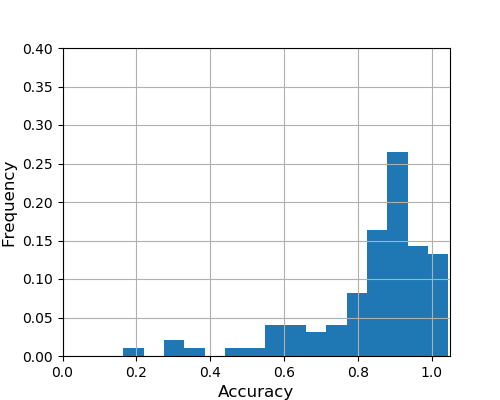}} \hfill
\subfloat[$n =25$, Time]{\includegraphics[width=.25\textwidth]{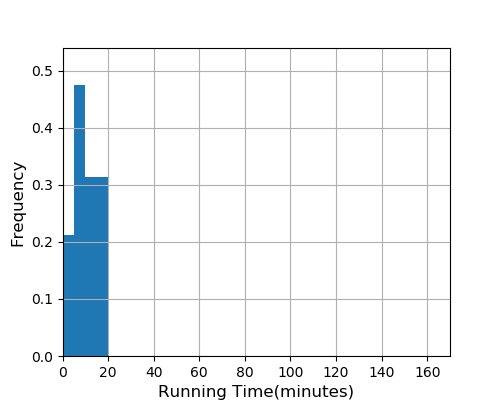}}
\subfloat[$n =50$, Time]{\includegraphics[width=.25\textwidth]{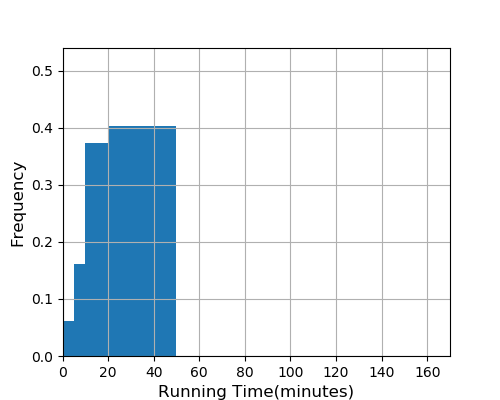}} 
\subfloat[$n =100$, Time]{\includegraphics[width=.25\textwidth]{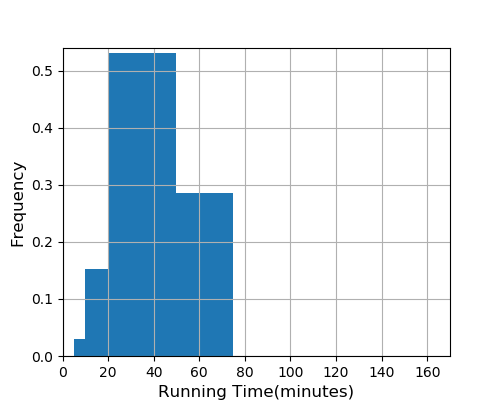}}
\subfloat[$n =200$, Time]{\includegraphics[width=.25\textwidth]{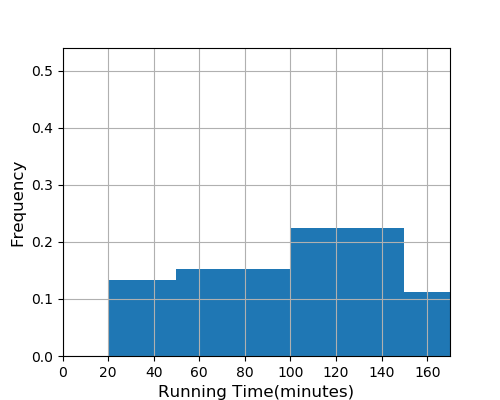}} \hfill
\caption{The effect of number of samples over utility and performance in PCOR with BFS sampling and LOF outlier detection for $\epsilon =0.2$ (a,e) $n = 25$, (b,f) $n = 50$, (c,g) $n = 100$, (d,h) $n=200$}
\label{fig:Epsilons} 
\end{figure*}
\end{document}